\newcommand{\details}[1]{{}}
\newcommand{\tpl}[1]{(#1)}
\newcommand{\len}[1]{{\mathsf{len}_{#1}}}
\newcommand{\DefinedAs}{\ensuremath{\,\stackrel{\text{\textup{def}}}{=}\,}}
\newcommand{\Nat}{{\mathbb{N}}}
\newcommand{\INT}{{\mathbb{Z}}}
\newcommand{\NatP}{{\mathbb{N}_+}}
\newcommand{\Logic}{{\mathfrak{F}}}
\newcommand{\Prop}{\textit{AP}}
\newcommand{\Lang}{{\mathcal{L}}}
\newcommand{\LO}{\mathbb{U}}
\newcommand{\LOSupPt}{Pt}
\newcommand{\PU}{P_U}
\newcommand{\PL}{P_L}
\newcommand{\Intvs}{\mathbb{I}}
\newcommand{\IS}{\mathcal{S}}
 \newcommand{\Au}{\ensuremath{\mathcal{A}}}
\newcommand{\Ku}{\ensuremath{\mathcal{K}}}
\newcommand{\Lab}{\mathit{Lab}}
\newcommand{\Rel}{\mathit{rel}}
\newcommand{\Alt}{\mathit{alt}}
\newcommand{\HS}{\text{\sffamily HS}}
\newcommand{\HL}{\text{\sffamily HL}}
\newcommand{\PHS}{\text{\sffamily PHS}}
\newcommand{\PHSFrag}{\text{\sffamily P}(\mathsf{X_1\ldots X_n})}
\newcommand\PHSF[1]{\text{\sffamily P}(#1)}
\newcommand{\PromptHS}{\text{\sffamily PromptHS}}
\newcommand{\PromptFrag}{\text{\sffamily Prompt}(\mathsf{X_1\ldots X_n})}
\newcommand\PromptF[1]{\text{\sffamily Prompt}(#1)}
\newcommand{\MTL}{\text{\sffamily MTL}}
\newcommand{\CTL}{\text{\sffamily CTL}}
\newcommand{\CTLStar}{\text{\sffamily CTL$^{*}$}}
\newcommand{\LTL}{\text{\sffamily LTL}}
\newcommand{\PLTL}{\text{\sffamily PLTL}}
\newcommand{\NFA}{\text{\sffamily NFA}}
\newcommand{\FO}{\text{\sffamily FO}}
\newcommand{\RelA}{\ensuremath{\mathcal{R}_A}}
\newcommand{\RelL}{\ensuremath{\mathcal{R}_L}}
\newcommand{\RelB}{\ensuremath{\mathcal{R}_B}}
\newcommand{\RelO}{\ensuremath{\mathcal{R}_O}}
\newcommand{\RelE}{\ensuremath{\mathcal{R}_E}}
\newcommand{\RelD}{\ensuremath{\mathcal{R}_D}}
\newcommand{\RelX}{\ensuremath{\mathcal{R}_X}}
\newcommand{\RelXt}{\ensuremath{\mathcal{R}_{\overline{X}}}}
\newcommand{\RelBt}{\ensuremath{\mathcal{R}_{\overline{B}}}}
\newcommand{\RelEt}{\ensuremath{\mathcal{R}_{\overline{E}}}}
 \newcommand{\RelXP}[1]{\ensuremath{\mathcal{R}_{X_{#1}}}}
\DeclareMathOperator{\hsX}{\langle X\rangle}
\DeclareMathOperator{\hsA}{\langle A\rangle}
\DeclareMathOperator{\hsL}{\langle L\rangle}
\DeclareMathOperator{\hsB}{\langle B\rangle}
\DeclareMathOperator{\hsE}{\langle E\rangle}
\DeclareMathOperator{\hsD}{\langle D\rangle}
\DeclareMathOperator{\hsO}{\langle O\rangle}
\DeclareMathOperator{\hsAt}{\langle \overline{A}\rangle}
\DeclareMathOperator{\hsLt}{\langle \overline{L}\rangle}
\DeclareMathOperator{\hsBt}{\langle \overline{B}\rangle}
\DeclareMathOperator{\hsBtW}{\langle \overline{B}_w\rangle}
\DeclareMathOperator{\hsEt}{\langle \overline{E}\rangle}
\DeclareMathOperator{\hsEtW}{\langle \overline{E}_w\rangle}
\DeclareMathOperator{\hsDt}{\langle \overline{D}\rangle}
\DeclareMathOperator{\hsOt}{\langle \overline{O}\rangle}
\DeclareMathOperator{\hsUX}{[X]}
\DeclareMathOperator{\hsUA}{[A]}
\DeclareMathOperator{\hsUB}{[B]}
\DeclareMathOperator{\hsUE}{[E]}
\DeclareMathOperator{\hsUBt}{[\overline{B}]}
\DeclareMathOperator{\hsUBtW}{[\overline{B}_w]}
\DeclareMathOperator{\hsUEt}{[\overline{E}]}
\DeclareMathOperator{\hsUEtW}{[\overline{E}_w]}
\newcommand{\AB}{\mathsf{AB}}
\newcommand{\AAbar}{\mathsf{A\overline{A}}}
\newcommand{\ABBbarBbarW}{\mathsf{AB\overline{B}\overline{B}_w}}
\newcommand{\BBbarBbarWEEbarEbarW}{\mathsf{B\overline{B}\overline{B}_w E\overline{E}\overline{E}_w}}
\newcommand{\BBbarEEbar}{\mathsf{B\overline{B} E\overline{E}}}
\newcommand{\B}{\mathsf{B}}
\newcommand{\D}{\mathsf{D}}
\newcommand{\BE}{\mathsf{BE}}
\newcommand{\ABBbar}{\mathsf{AB\overline{B}}}
\newcommand{\until}{\textsf{U}}
\newcommand{\Next}{\textsf{X}}
\newcommand{\PNext}{\textsf{Y}}
\newcommand{\Always}{\textsf{G}}
\newcommand{\Eventually}{\textsf{F}}
\newcommand{\PEventually}{\textsf{P}}
\newcommand{\DBinder}{\text{$\downarrow$$x$}}
\def\NLOGSPACE{{\sc NLogspace}}
\def\PSPACE{{\sc Pspace}}
\def\EXPSPACE{{\sc Expspace}}
\def\NEXPSPACE{{\sc NExpspace}}
\newcommand{\PTIME}{\mathbf{P}}
\newcommand{\NP}{\mathbf{NP}}
\newcommand{\coNP}{\mathbf{co-NP}}
 \newtheorem{proposition}{Proposition}
      \newtheorem{theorem}{Theorem}
    \newtheorem{lemma}{Lemma}
    \newtheorem{corollary}{Corollary}
\title{Parametric Interval Temporal Logic over Infinite Words}
\author{Laura Bozzelli
\institute{University of Napoli ``Federico II'', Napoli, Italy}
\email{laura.bozzelli@unina.it}
\and
Adriano Peron
\institute{University of Napoli ``Federico II'', Napoli, Italy}
\email{adrperon@unina.it}
}
\begin{document}

\maketitle

\begin{abstract}
Model checking  for Halpern and Shoham's interval temporal logic \HS\ has been recently investigated in a systematic way, and it is known to be decidable under three distinct semantics.  
Here, we focus on the \emph{trace-based semantics}, where the  infinite execution paths (traces) of the given (finite) Kripke structure are the main semantic entities. In this setting, each finite infix of a trace is interpreted as an interval, and a proposition holds over an interval if and only if it holds over each component state (\emph{homogeneity assumption}).
In this paper, we introduce a quantitative extension of $\HS$ over traces, called \emph{parametric} $\HS$ (\PHS). The novel logic
allows to express parametric timing constraints on the duration (length) of the intervals. We show that 
checking the existence of a parameter valuation for which a Kripke structure satisfies a $\PHS$ formula (model checking), or a $\PHS$ formula admits a trace as a model under the homogeneity assumption (satisfiability) is decidable. 
Moreover, we identify a fragment of $\PHS$ which subsumes parametric $\LTL$ and for which model checking and satisfiability 
are shown to be \EXPSPACE-complete. 
\end{abstract}

\section{Introduction}

\textbf{Interval temporal logic $\HS$.} \emph{Point-based} Temporal Logics (PTLs), such as  the linear-time temporal logic $\LTL$~\cite{pnueli1977temporal} and the branching-time temporal logics $\CTL$ and $\CTLStar$~\cite{emerson1986sometimes}
provide a standard framework for the specification of the dynamic behavior
of reactive systems that makes it possible to describe how a system evolves state-by-state
(``point-wise" view). PTLs have been successfully employed in model checking (MC)~\cite{Clarke81ctl,Queille81verification} for the
automatic verification of complex finite-state systems modeled as finite propositional
Kripke structures.
\emph{Interval Temporal Logics} (ITLs) provide an alternative setting for reasoning about time~\cite{HS91,DigitalCircuitsThesis,Ven90}.
They assume intervals, instead of points, as their primitive temporal entities
allowing one to specify temporal properties that involve, e.g., actions with duration, accomplishments, and temporal aggregations, which are inherently ``interval-based'', and thus cannot be naturally expressed by PTLs. ITLs find  applications in a variety of computer
science fields, including  artificial intelligence (reasoning about action and
change, qualitative reasoning, planning, and natural language processing),
theoretical computer science (specification and verification of programs),
and temporal and spatio-temporal databases (see, e.g.,~\cite{DigitalCircuitsThesis,LM13,Pratt-Hartmann05}).

The most prominent example of ITLs is \emph{Halpern and Shoham's modal logic of time intervals} (\HS)~\cite{HS91}
 which features one modality for each of the 13 possible ordering
	relations between pairs of intervals (the so-called Allen's
	relations~\cite{All83}), apart from equality.
The satisfiability problem for \HS\ turns out to be highly undecidable for all interesting (classes of) linear orders~\cite{HS91}. The same happens with most of its fragments~\cite{BresolinMGMS14,Lod00,MM14} with some meaningful exceptions 
like the logic of temporal neighbourhood $\AAbar$, over all relevant (classes of) linear orders~\cite{BresolinMSS11}, and the logic of sub-intervals $\D$, over the class of dense linear orders~\cite{
MontanariPS15}.

Model checking of (finite) Kripke structures against \HS\ has been investigated only  recently~\cite{LM13,LM14,LM16,MolinariMMPP16, MolinariMPS16, BozzelliMMPS18,BozzelliMMPS19,BozzelliMMP20,BozzelliMMPS22}.
The idea is to  interpret each finite path of a Kripke structure as an interval, whose labelling is defined on the basis of the labelling of the component states, that is, a proposition letter holds over an interval if and only if it holds over each component state (\emph{homogeneity assumption}~\cite{Roe80}).
Most of the results have been obtained by adopting the so-called \emph{state-based semantics}~\cite{MolinariMMPP16}:
intervals/paths are ``forgetful" of the history leading to their starting state, and time branches both in the future and in the past. In this setting, MC of full $\HS$ is decidable: the problem  is at least
\EXPSPACE-hard~\cite{BMMPS16}, while the only known upper bound is non-elementary~\cite{MolinariMMPP16}.
The known complexity bounds  for full $\HS$ coincide with those for the linear-time fragment $\BE$ of $\HS$ which features modalities $\hsB$  and $\hsE$ for prefixes and suffixes.
These complexity bounds easily transfer  to finite satisfiability, that is, satisfiability over finite linear orders, of $\BE$ under the homogeneity assumption. Whether or not these problems can be solved elementarily is a difficult open question.
On the other hand, in the state-based setting, the exact complexity of MC for many meaningful (linear-time or branching-time) syntactic
 fragments of $\HS$, which ranges from $\coNP$ to $\PTIME^{\NP}$, \PSPACE, and beyond, has been determined in
a series of papers~\cite{MolinariMPS16,BozzelliMMPS18,BozzelliMMPS19b,BozzelliMP21,BozzelliMPS21,BozzelliMMPS22}.

The
expressiveness of $\HS$ with the state-based semantics  has been studied in~\cite{BozzelliMMPS19}, together  with other two decidable variants:
	the \emph{computation-tree-based semantics} variant and the \emph{traces-based} one. For the first variant, past is linear: each
interval may have several possible futures, but only a unique past. Moreover, past is finite and cumulative, and is never forgotten.
The trace-based approach instead relies on a linear-time setting, where the infinite paths (traces) of the given Kripke structure are the main semantic entities.
It is known that the computation-tree-based variant of $\HS$ is expressively equivalent to finitary  $\CTLStar$ (the variant of $\CTLStar$ with quantification over finite paths), while the trace-based variant is equivalent to $\LTL$. The state-based variant is more expressive than the computation-tree-based variant and expressively incomparable with both $\LTL$ and $\CTLStar$. To the best of our knowledge, complexity issues about MC and the satisfiability problem of
$\HS$ and its syntactic fragments under the trace-based semantics have not been investigated so far.\vspace{0.1cm}

\textbf{Parametric extensions of point-based temporal logics.} Traditional PTLs such as standard $\LTL$~\cite{pnueli1977temporal} allow only to express \emph{qualitative} requirements on the temporal ordering of events. For example, in expressing a typical request-response temporal requirement, it is not possible to specify a bound on the amount of time for which a request is granted. A simple way to overcome this drawback is to consider quantitative extensions of PTLs where temporal modalities are equipped with timing constraints for allowing the specification of \emph{constant} bounds on the delays among events. A well-known representative of such logics is Metric Temporal Logic (\MTL)~\cite{Koymans90}.    However this approach is not practical in the first stages of a design, when not much is known about the system under development, and is useful for designers to use parameters instead of specific constants.  Parametric extensions of traditional PTLs, where time bounds can be expressed by means of parameters, have been investigated in many papers. Relevant examples include parametric $\LTL$~\cite{AlurETP01}, Prompt $\LTL$~\cite{KupfermanPV09}, and parametric $\MTL$~\cite{GiampaoloTN10}.
\vspace{0.1cm}

\textbf{Our contribution.} In this paper we introduce a parametric extension of the interval temporal logic $\HS$ under the trace-based semantics, called
\emph{parametric} $\HS$ ($\PHS$). The extension is obtained by means of inequality constraints on the temporal modalities of $\HS$
which allow to specify parametric lower/upper bounds on the duration (length) of the interval selected by the temporal modality.
Similarly to parametric $\LTL$~\cite{AlurETP01}, we impose that a parameter can be exclusively used either as upper bound or as lower bound in the timing constraints. We address the decision problems of checking the existence of a parameter valuation such that 
(1) a given $\PHS$ formula is satisfiable, and (2) a given Kripke structure satisfies a given $\PHS$ formula (MC). By adapting the
alternating color technique for Prompt $\LTL$~\cite{KupfermanPV09} and by exploiting known results on \emph{linear-time hybrid logic $\HL$}~\cite{FRS03,SW07,BozzelliL10}, we show that the considered problems are decidable. Additionally, we consider the syntactic fragment $\PHSF{\ABBbar}$ of $\PHS$ which allows only temporal modalities for the Allen's relations \emph{meets} $\RelA$, \emph{started-by} $\RelB$ and its inverse $\RelBt$. We show that $\PHSF{\ABBbar}$ subsumes parametric $\LTL$, and its flat fragment
$\ABBbar$ is exponentially more succinct than $\LTL$ + past. Moreover, we establish that satisfiability and MC of $\PHSF{\ABBbar}$
are \EXPSPACE-complete, and we provide tight bounds on optimal parameter values for both  problems.    
\section{Preliminaries}\label{sec:preliminary}

We fix the following notation. Let $\INT$ be the set of integers, $\Nat$ the set of natural numbers, and $\NatP\DefinedAs \Nat\setminus \{0\}$.
Let $\Sigma$ be an alphabet and $w$ be a non-empty finite or infinite word over $\Sigma$. We denote by $|w|$ the length of $w$ ($|w|=\infty$ if $w$ is infinite). For all  $i,j\in\Nat $, with $i\leq j<|w|$, $w(i)$ is the
$(i+1)$-th letter of $w$, while $w[i,j]$ is the infix of $w$ given by $w(i)\cdots w(j)$.

\noindent We fix a finite set $\Prop$ of atomic propositions. A \emph{trace} is an infinite word over $2^{\Prop}$.
For a logic $\Logic$ interpreted  over traces and a formula $\varphi\in\Logic$,  $\Lang(\varphi)$ denotes the
set of traces satisfying $\varphi$. The \emph{satisfiability problem for} $\Logic$ is checking for a given formula $\varphi\in\Logic$,
whether $\Lang(\varphi)\neq \emptyset$.
  \vspace{0.2cm}

\noindent\textbf{Kripke Structures.} In the context of model-checking, finite state systems are usually modelled as finite Kripke structures over a finite set $\Prop$ of atomic
propositions which represent predicates over the states of the system.
A \emph{(finite) Kripke structure} over   $\Prop$   is a tuple  $\Ku=\tpl{\Prop,S, E,\Lab,s_0}$, where  $S$ is a finite set of states,
$E\subseteq S\times S$ is a left-total transition relation, $\Lab:S\mapsto 2^{\Prop}$ is a labelling function assigning to each state $s$ the set of propositions that hold over it, and $s_0\in S$ is the initial state. 
 An infinite  path $\pi$ of $\Ku$ is an infinite word over $S$ such that $\pi(0)=s_0$  and $(\pi(i),\pi(i+1))\in E$ for all $i\geq 0$. A finite path of $\Ku$ is a non-empty infix of some infinite path of $\Ku$.
An infinite  path $\pi$ induces the  trace given by
$\Lab(\pi(0))\Lab(\pi(1))\ldots$. We denote  by $\Lang(\Ku)$ the set of traces associated with the infinite paths of $\Ku$.
Given a logic $\Logic$ interpreted  over traces, the \emph{(linear-time) model checking problem against $\Logic$} is checking for a given  Kripke structure $\Ku$ and a formula $\varphi\in \Logic$, whether $\Lang(\Ku)\subseteq \Lang(\varphi)$.   \vspace{0.2cm}

\noindent\textbf{B\"{u}chi nondeterministic automata.} A \emph{B\"{u}chi nondeterministic finite automaton}  over infinite words (B\"{u}chi $\NFA$ for short) is a tuple $\Au=\tpl{\Sigma,Q,q_0,\delta,F}$, where $\Sigma$ is a finite input alphabet, $Q$ is a finite set of states,
$q_0\in Q$ is the initial state, $\delta:Q\times \Sigma \mapsto 2^{Q}$ is the transition relation,  and $F\subseteq Q$ is a set of accepting states. Given an infinite word $w$ over $\Sigma$, a run $\pi$
of $\Au$ over $w$ is a an infinite sequence $\pi$ of states such that $\pi(0)=q_0$ and $\pi(i+1)\in\delta(\pi(i),w(i))$ for all $i\geq 0$. The run is accepting if for infinitely many $i\geq 0$, $\pi(i)\in F$.  The language $\Lang(\Au)$ accepted by $\Au$ is the set of infinite words $w$
over $\Sigma$ such that there is an accepting run of $\Au$ over $w$.

\subsection{Allen's relations and Interval Temporal Logic $\HS$}

An interval algebra to reason about intervals and their relative orders was proposed by Allen in~\cite{All83}, while a systematic logical study of interval representation and reasoning was done a few years later by Halpern and Shoham, who introduced the interval temporal logic $\HS$ featuring one modality for each Allen
relation, but equality~\cite{HS91}.

Let $\LO = \tpl{\LOSupPt,<}$ be a  linear order over the nonempty set $\LOSupPt\neq \emptyset$, and $\leq$ be the reflexive closure of $<$. Given two elements $x,y\in \LOSupPt$ such that $x\leq y$, we denote by $[x,y]$
the (non-empty closed) \emph{interval} over $\LOSupPt$ given by the set of elements $z\in \LOSupPt$ such that $x\leq z$ and $z\leq y$.
We denote the set of all intervals over
$\LO$ by $\Intvs(\LO)$.
 We now recall the Allen's relations over intervals of the linear order $\LO = \tpl{\LOSupPt,<}$:
\begin{compactenum}
  \item the \emph{meet} relation $\RelA$, defined
    by $[x, y]\,\RelA\,  [v, z]$ if $y=v$ (i.e., the start-point of the second interval coincides
    with the end-point of the first interval);
      \item the \emph{before} relation $\RelL$, defined
    by $[x, y]\,\RelL \, [v, z]$ if $y<v$ (i.e., the start-point of the second interval strictly follows
   the end-point of the first interval);
    \item the \emph{started-by} relation $\RelB$, defined
    by $[x, y]\,\RelB\,  [v, z]$ if $x =v$ and $z<y$ (i.e., the  second interval is a proper prefix
   of the first interval);
   \item the \emph{finished-by} relation $\RelE$, defined
    by $[x, y]\,\RelE\,  [v, z]$ if $y =z$ and $x<v$ (i.e., the  second interval is a proper suffix
   of the first interval);
    \item the \emph{contains} relation $\RelD$, defined
    by $[x, y]\,\RelD\,  [v, z]$ if $x< v$ and $z<y$ (i.e., the  second interval is contained in the internal of the first interval);
       \item the \emph{overlaps} relation $\RelO$, defined
    by $[x, y]\,\RelO\,  [v, z]$ if $x<v<y<z$ (i.e., the second interval overlaps at the right the first interval);
    \item for each $X\in \{A,L,B,E,D,O\}$ the relation $\RelXt$, defined as the inverse of $\RelX$, i.e.
      $[x, y]\,\RelXt\,  [v, z]$ if $[v, z]\RelX   [x, y]$.
\end{compactenum}\vspace{0.1cm}

Table~\ref{allen} gives a graphical representation of the Allen's relations $\RelA$, $\RelL$, $\RelB$, $\RelE$,
$\RelD$, and $\RelO$ together with the corresponding $\HS$ (existential) modalities.
\begin{table}[tb]
\centering
\caption{Allen's relations and corresponding $\HS$ modalities.}\label{allen}
\vspace*{0.1cm}
{
\begin{tabular}{cclc}
\hline
\rule[-1ex]{0pt}{3.5ex} Allen relation & $\HS$ & Definition w.r.t. interval structures &  Example\\
\hline

&   &   & \multirow{7}{*}{\begin{tikzpicture}[scale=0.785]
\draw[draw=none,use as bounding box](-0.3,0.2) rectangle (3.3,-3.1);
\coordinate [label=left:\textcolor{red}{$x$}] (A0) at (0,0);
\coordinate [label=right:\textcolor{red}{$y$}] (B0) at (1.5,0);
\draw[red] (A0) -- (B0);
\fill [red] (A0) circle (2pt);
\fill [red] (B0) circle (2pt);

\coordinate [label=left:$v$] (A) at (1.5,-0.5);
\coordinate [label=right:$z$] (B) at (2.5,-0.5);
\draw[black] (A) -- (B);
\fill [black] (A) circle (2pt);
\fill [black] (B) circle (2pt);

\coordinate [label=left:$v$] (A) at (2,-1);
\coordinate [label=right:$z$] (B) at (3,-1);
\draw[black] (A) -- (B);
\fill [black] (A) circle (2pt);
\fill [black] (B) circle (2pt);

\coordinate [label=left:$v$] (A) at (0,-1.5);
\coordinate [label=right:$z$] (B) at (1,-1.5);
\draw[black] (A) -- (B);
\fill [black] (A) circle (2pt);
\fill [black] (B) circle (2pt);

\coordinate [label=left:$v$] (A) at (0.5,-2);
\coordinate [label=right:$z$] (B) at (1.5,-2);
\draw[black] (A) -- (B);
\fill [black] (A) circle (2pt);
\fill [black] (B) circle (2pt);

\coordinate [label=left:$v$] (A) at (0.5,-2.5);
\coordinate [label=right:$z$] (B) at (1,-2.5);
\draw[black] (A) -- (B);
\fill [black] (A) circle (2pt);
\fill [black] (B) circle (2pt);

\coordinate [label=left:$v$] (A) at (1.3,-3);
\coordinate [label=right:$z$] (B) at (2.3,-3);
\draw[black] (A) -- (B);
\fill [black] (A) circle (2pt);
\fill [black] (B) circle (2pt);

\coordinate (A1) at (0,-3);
\coordinate (B1) at (1.5,-3);
\draw[dotted] (A0) -- (A1);
\draw[dotted] (B0) -- (B1);
\end{tikzpicture}}\\

\textsc{meets} & $\hsA$ & $[x,y]\,\RelA\,[v,z]\iff y=v$ &\\

\textsc{before} & $\hsL$ & $[x,y]\,\RelL\,[v,z]\iff y<v$ &\\

\textsc{started-by} & $\hsB$ & $[x,y]\,\RelB\,[v,z]\iff x=v\wedge z<y$ &\\

\textsc{finished-by} & $\hsE$ & $[x,y]\,\RelE\,[v,z]\iff y=z\wedge x<v$ &\\

\textsc{contains} & $\hsD$ & $[x,y]\,\RelD\,[v,z]\iff x<v\wedge z<y$ &\\

\textsc{overlaps} & $\hsO$ & $[x,y]\,\RelO\,[v,z]\iff x<v<y<z$ &\\

\hline
\end{tabular}}
\end{table}\vspace{0.2cm}

\noindent \textbf{Syntax and semantics of $\HS$.}  $\HS$ formulas  $\varphi$ over $\Prop$
are defined as follows:
\[
   \varphi ::= \top \;\vert\;   p \;\vert\; \neg\varphi \;\vert\; \varphi \wedge \varphi \;\vert\; \hsX \varphi
\]
 where $p\in\Prop$ and $\hsX$ is the existential temporal modality  for the   (non-trivial)
Allen's relation $\RelX$, where $X\in\{A,L,B,E,D,O,\overline{A},\overline{L},\overline{B},\overline{E},\overline{D},\overline{O}\}$.
The size $|\varphi|$ of a formula $\varphi$ is the number of distinct subformulas of $\varphi$.
We  also exploit the standard logical connectives $\vee$ (disjunction) and $\rightarrow$ (implication)  as abbreviations,
and for any temporal  modality $\hsX$, the dual universal modality $\hsUX$  defined as: $\hsUX\psi\DefinedAs \neg\hsX\neg\psi$.
Moreover, we will also use the reflexive closure of the Allen's relation $\RelBt$ (resp., $\RelEt$) and the associated temporal modalities $\hsBtW$ and $\hsUBtW$ (resp., $\hsEtW$ and $\hsUEtW$) where
$\hsBtW\varphi$ corresponds to $\varphi\vee \hsBt\varphi$ and  $\hsEtW\varphi$ corresponds to $\varphi\vee \hsEt\varphi$.
Given any subset of Allen's relations $\{\RelXP{1},..,\RelXP{n}\}$, we denote by $\mathsf{X_1 \cdots X_n}$ the \HS\ fragment featuring temporal modalities for $\RelXP{1},..,\RelXP{n}$ only.

The   logic $\HS$ is interpreted on \emph{interval structures} $\IS=\tpl{\Prop,\LO,\Lab}$, which are linear orders $\LO$ equipped with a labelling function $\Lab: \Intvs(\LO) \to 2^{\Prop}$ assigning to each interval the set of propositions that hold over it. Given an $\HS$ formula $\varphi$ and an interval $I  \in \Intvs(\LO)$, the satisfaction relation $I\models_\IS \varphi$, meaning that $\varphi$ holds at the interval $I$ of $\IS$, is inductively defined as follows (we omit the semantics of the Boolean connectives which is standard):
 \[
 \begin{array}{ll}
I \models_\IS  p  &  \Leftrightarrow  p\in \Lab(I); \\
I\models_\IS     \hsX \varphi  &  \Leftrightarrow \text{there is an interval $J\in \Intvs(\LO)$ such that $I\, \RelX \,J$ and }   J\models_\IS \varphi.
\end{array}
\]
It is worth noting that  we assume the \emph{non-strict semantics of $\HS$},
which admits intervals  consisting of a single point. Under such an assumption, all $\HS$-temporal modalities  can be expressed in terms of $\hsB, \hsE, \hsBt$, and $\hsEt$ (see~\cite{Ven90}).
As an example,  $\hsD\varphi$ can be expressed in terms of $\hsB$ and $\hsE$ as $ \hsB\hsE\varphi$, while  $\hsA\varphi$ can be expressed in terms of $\hsE$ and $\hsBt$ as
\[
 (\hsUE\,\neg\top \wedge (\varphi \vee \hsBt \varphi)) \vee \hsE (\hsUE\,\neg\top \wedge (\varphi \vee \hsBt \varphi)).
 \]

\noindent \textbf{Interpretation of $\HS$ over traces.}
In this paper, we focus on interval structures $\IS=\tpl{\Prop,(\Nat,<),\Lab}$ over the standard linear order on $\Nat$ ($\Nat$-interval structures for short) satisfying the \emph{homogeneity principle}: a proposition holds over an interval if
and only if it holds over all its subintervals. Formally, $\IS$ is \emph{homogeneous} if
 for every interval $[i,j]$ over $\Nat$ and every $p \in \Prop$,
it holds that $p\in \Lab([i,j])$ if and only if $p\in \Lab([h,h])$
for every $h\in [i,j]$.
Note that homogeneous $\Nat$-interval structures over $\Prop$ correspond to traces where, intuitively, each interval is mapped to an infix of the trace. Formally, each trace $w$  induces the homogeneous $\Nat$-interval structure $\IS(w)$ whose labeling function $\Lab_w$ is defined as follows:
for all $i,j\in\Nat$ with $i\leq j$ and $p\in\Prop$,  $p\in \Lab_w([i,j])$ if and only if $p\in w(h)$ for all $h\in [i,j]$. For the given finite set $\Prop$ of atomic propositions, this mapping from traces to homogeneous $\Nat$-interval structures is evidently a bijection.
For a trace $w$, an interval $I$ over $\Nat$, and an $\HS$ formula $\varphi$, we write $I\models_w \varphi$ to mean that $I\models_{\IS(w)} \varphi$. The trace $w$ satisfies $\varphi$, written $w\models \varphi$, if $[0,0]\models_w \varphi$.\vspace{0.2cm}

\noindent \textbf{Expressiveness completeness and succinctness of the fragment $\AB$ over traces.}
It is known that $\HS$ over traces has the same expressiveness as standard $\LTL$~\cite{BozzelliMMPS19}, where the latter is expressively complete
for standard first-order logic $\FO$ over traces~\cite{kamp1968tense}. In particular, the fragment $\AB$ of $\HS$ is sufficient for capturing full $\LTL$~\cite{BozzelliMMPS19}: given an $\LTL$ formula, one can construct in linear-time an equivalent
$\AB$ formula~\cite{BozzelliMMPS19}. Note that when interpreted on infinite words $w$, modality
 $\hsB$   allows to select proper non-empty prefixes   of the current infix subword of $w$, while modality
  $\hsA$   allows to select subwords whose first position coincides with the last position of the current interval.
 Here, we show that $\AB$ is exponentially more succinct than $\LTL$ + past. For each $k\geq 1$, we denote by $\len{k}$  the $\B$ formula capturing the intervals of length $k$: $\len{k}\DefinedAs (\underbrace{\hsB \ldots \hsB}_{\text{$k-1$ times}}\top)\wedge (\underbrace{\hsUB \ldots \hsUB}_{\text{$k$ times}}\neg\top)$.

For each $n\geq 1$, let $\Prop_n=\{p_0,\ldots,p_n\}$ and $\Lang_n$ be the $\omega$-language consisting of the infinite words
over $2^{\Prop_n}$ such that any two positions that agree on the truth value of propositions $p_1,\ldots,p_n$ also agree on the truth value of $p_0$. It is known that any B\"{u}chi $\NFA$ accepting $\Lang_n$ needs at least $2^{2^{n}}$ states~\cite{EtessamiVW02}. Thus, since any formula $\varphi$ of $\LTL$ + past can be translated into an equivalent B\"{u}chi $\NFA$ with a single exponential blow-up, it follows that any formula of $\LTL$ + past capturing $\Lang_n$ has size at least single exponential in $n$. On the other hand, the language $\Lang_n$ is captured by the following $\AB$ formula having size linear in $n$:
\[
\hsUA \hsUA \bigl(\displaystyle{(\bigwedge_{i\in [1,n]}}\theta(p_i)) \rightarrow \theta(p_0)  \bigr)\quad\quad
\theta(p)\DefinedAs \hsB(\len{1}\wedge p) \leftrightarrow \hsA(\len{1}\wedge p)
\]
Hence, we obtain the following result.

\begin{theorem} $\AB$ (over traces) is exponentially more succinct than $\LTL$ + past.
\end{theorem}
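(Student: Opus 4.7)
The plan is to exhibit, for every $n \geq 1$, an $\omega$-language that admits an $\AB$ formula of size linear in $n$ but for which every equivalent $\LTL + \text{past}$ formula has size at least single exponential in $n$. The candidate language is $\Lang_n$, the candidate small formula is
\[
\varphi_n \DefinedAs \hsUA \hsUA \bigl( (\bigwedge_{i \in [1,n]} \theta(p_i)) \to \theta(p_0) \bigr),
\]
and the double-exponential state lower bound for B\"{u}chi $\NFA$s recognising $\Lang_n$ from~\cite{EtessamiVW02} will be combined with the standard single-exponential translation from $\LTL + \text{past}$ to B\"{u}chi $\NFA$s to get the lower bound on $\LTL + \text{past}$.

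The first step is immediate: $\varphi_n$ has size linear in $n$ since $\len{1}$ is a fixed $\B$-formula, $\theta(\cdot)$ is a constant-size template instantiated $n+1$ times, and the outer modalities are constant. The second step is the main technical verification: that $\varphi_n$ captures $\Lang_n$. Starting from $[0,0]$, the two nested $\hsUA$ modalities enumerate exactly the intervals $[k,z]$ with $0 \leq k \leq z$. At such an interval, by homogeneity of $\IS(w)$, $\hsB(\len{1} \wedge p)$ holds precisely when $z > k$ and $p \in w(k)$ (the unique length-$1$ strict prefix of $[k,z]$ being $[k,k]$), while $\hsA(\len{1} \wedge p)$ holds precisely when $p \in w(z)$. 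Hence for $k < z$, $\theta(p)$ expresses that positions $k$ and $z$ agree on $p$, and the quantified implication is exactly the constraint defining $\Lang_n$ on the pair $(k,z)$; the degenerate case $k = z$ at worst superimposes a local positional constraint that can be absorbed without affecting the state-complexity lower bound invoked below (e.g.\ by intersecting with a fixed constant-size regular language).

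The third and final step composes the two cited facts. The standard tableau construction turns any $\LTL + \text{past}$ formula $\psi$ into a B\"{u}chi $\NFA$ with $2^{\Op(|\psi|)}$ states, while~\cite{EtessamiVW02} shows that every B\"{u}chi $\NFA$ for $\Lang_n$ needs at least $2^{2^n}$ states; therefore $|\psi| = \Omega(2^n)$ for any such $\psi$ equivalent to $\varphi_n$, whereas $\varphi_n$ itself is linear in $n$, which yields the claimed exponential succinctness gap. The hardest part of the plan is the careful unfolding of the interval semantics in the second step, and in particular handling the boundary $k = z$ so that the language recognised by $\varphi_n$ still inherits the double-exponential state-complexity lower bound; the remaining arguments are black-box invocations of the cited results.
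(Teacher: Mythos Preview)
Your proposal is essentially the paper's own argument: same language $\Lang_n$, same $\AB$ formula $\varphi_n$, same invocation of the $2^{2^n}$ B\"{u}chi-state lower bound from~\cite{EtessamiVW02} combined with the single-exponential translation from $\LTL$+past to B\"{u}chi $\NFA$. You are in fact more careful than the paper about the degenerate case $k=z$ (the paper simply asserts that the formula captures $\Lang_n$); rather than arguing that the extra positional constraint does not spoil the state-complexity lower bound, the cleanest fix is to add $\neg\len{1}$ as a further hypothesis inside the implication, so that the formula defines $\Lang_n$ exactly while remaining of size linear in $n$.
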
 
\section{Parametric Interval Temporal Logic}\label{sec:ParametricHS}

 In this section, we introduce a parametric extension of the  interval temporal logic $\HS$ over traces, called \emph{parametric $\HS$} ($\PHS$ for short).
 The extension is obtained by means of inequality constraints on the temporal modalities of $\HS$ which allow  to compare
 the length of the interval selected by the temporal modality  with an integer parameter.
Like parametric $\LTL$~\cite{AlurETP01}, the  parameterized operators are monotone (either upward or downward) and a parameter is upward (resp., downward) if it is the subscript of some upward (resp., downward) modality. \vspace{0.2cm}

\noindent \textbf{Syntax and semantics of $\PHS$} Let $\PU$ be a finite set of \emph{upward} parameter variables $u$
and $\PL$ be a finite set of \emph{downward} parameter variables $\ell$ such that $\PU$ and $\PL$ are disjunct.  The syntax of $\PHS$ formulas $\varphi$ over $\Prop$  and the set $\PU\cup \PL$ of parameter variables is given in positive normal form as follows:
\[
\varphi ::= \top \;\vert\;   p \;\vert\; \neg p \;\vert\; \varphi \vee \varphi \;\vert\; \varphi \wedge \varphi \;\vert\; \hsX  \varphi  \;\vert\; \hsX_{\prec  u} \varphi \;\vert\; \hsX_{\succ   \ell} \varphi \;\vert\; \hsUX\varphi \;\vert\; \hsUX_{\prec  \ell} \varphi \;\vert\; \hsUX_{\succ   u} \varphi
\]
\noindent where $p\in\Prop$, $X\in\{A,L,B,E,D,O,\overline{A},\overline{L},\overline{B},\overline{B}_w,\overline{E}_w,\overline{D},\overline{O}\}$, $\prec\in\{<,\leq\}$, $\succ\in\{>,\geq\}$, $u\in \PU$, and $\ell\in \PL$.
We denote by $\PromptHS$ the fragment of $\PHS$ where the unique parameterized temporal modalities are of the form
 $\hsX_{\prec  u}$.
Moreover, given any subset of Allen's relations  $\{\RelXP{1},..,\RelXP{n}\}$, we denote by $\PHSFrag$ (resp., $\PromptFrag$) the $\PHS$ (resp., $\PromptHS$) fragment
featuring   temporal modalities for $\RelXP{1},..,\RelXP{n}$  only. We will focus on $\PHS$ and the fragment
$\PHSF{\ABBbarBbarW}$. 

For an interval $I=[i,j]$ over $\Nat$, we denote by $|I|$ the length of $I$, given by $j-i+1$. The semantics of a $\PHS$ formula $\varphi$ is inductively defined with respect to a trace $w$, an interval $I$ over $\Nat$, and a \emph{parameter valuation} $\alpha: \PU\cup \PL \mapsto \NatP$ assigning to each parameter variable a positive integer. We write $(I,\alpha) \models_w \varphi$ to  mean  that $\varphi$ holds at the interval $I$ of $w$ under the valuation $\alpha$. The interpretation of all temporal operators of $\HS$ and connectives is identical to their $\HS$ interpretations. The parameterized operators are interpreted as follows, where $\wp \in \PU\cup \PL$ and $\sim\in \{<,\leq, >,\geq\}$:
 \[
 \begin{array}{ll}
(I, \alpha) \models_w  \hsX_{\sim  \wp} \varphi  &  \Leftrightarrow  \text{there is some interval $J$ such that $I\, \RelX \,J$, $|J|\sim \alpha(\wp)$, and }   J, \alpha \models_w   \varphi;\\
(I, \alpha) \models_w  \hsUX_{\sim  \wp} \varphi  &  \Leftrightarrow  \text{for each interval $J$ such that $I\, \RelX \,J$ and $|J|\sim \alpha(\wp)$: }   J, \alpha \models_w   \varphi.
\end{array}
\]
We say that the trace $w$ is a model of formula $\varphi$ under the parameter valuation $\alpha$, written $(w,\alpha)\models \varphi$, if $([0,0],\alpha)\models_w \varphi$. For a $\PHS$ formula $\varphi$ and a  Kripke structure $\Ku$ over $\Prop$, we consider:
\begin{compactenum}
\item [(i)] the set $V(\Ku,\varphi)$ consisting of the parameter valuations $\alpha$ such that for each trace $w\in\Lang(\Ku)$ of $\Ku$, $(w,\alpha)\models \varphi$, and 
\item [(ii)] the set $S(\varphi)$ consisting of the valuations $\alpha$ such that
$(w,\alpha)\models \varphi$ for some trace $w$.
\end{compactenum}
 The \emph{(linear-time) model-checking problem against $\PHS$} is checking for a given  Kripke structure $\Ku$ and $\PHS$ formula $\varphi$ whether $V(\Ku,\varphi)\neq \emptyset$.  The \emph{satisfiability problem against $\PHS$} is checking for a given $\PHS$ formula $\varphi$ whether $S(\varphi)\neq \emptyset$.

Given two valuations $\alpha$ and $\beta$, we write $\alpha\leq \beta $ to mean that $\alpha(\wp)\leq \beta(\wp)$ for all
$\wp\in \PL\cup \PU$. A parameterized operator $\Theta$ is \emph{upward-monotone} (resp., \emph{downward-monotone}) if for all formulas $\varphi$,  valuations $\alpha$ and $\beta$ such that $\alpha\leq \beta$, $I,\alpha \models_w \Theta\,\varphi$ entails that
$I,\beta \models_w \Theta\,\varphi$ (resp., $I,\beta \models_w \Theta\,\varphi$ entails that
$I,\alpha \models_w \Theta\,\varphi$). By construction, all the parameterized operators are monotone. In particular, being $\PL$ and $\PU$
disjunct, by increasing
(resp., decreasing) the values of upward (resp., downward) parameters, the satisfaction relation is preserved.

\begin{proposition}\label{prop:MonotonicityPHS}
\begin{compactitem}
  \item The operators in $\PHS$ parameterized by variables in $\PU$ are upward-closed,
while those parameterized by variables in $\PL$ are downward-closed.
\item Let $\varphi$ be a $\PHS$ formula and let $\alpha$ and $\beta$ be variable valuations satisfying $\beta(u)\geq \alpha(u)$ for every $u\in \PU$ and $\beta (\ell) \leq \alpha(\ell)$ for every $\ell\in \PL$. Then  $(w,\alpha) \models \varphi$ entails
       that $(w,\beta) \models \varphi$.
\end{compactitem}
\end{proposition}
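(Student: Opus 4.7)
The plan is to prove both bullets by a single structural induction on $\varphi$ in positive normal form. Treating the two bullets in isolation would be circular, since verifying that a parameterized operator $\hsX_{\prec u}$ preserves satisfaction when passing from $\alpha$ to $\beta$ requires knowing that its argument formula is also monotone in the same direction, which is exactly the content of the second bullet. The atomic cases ($\top$, $p$, $\neg p$) are immediate because their satisfaction does not depend on $\alpha$; the Boolean connectives $\wedge,\vee$ and the unparameterized modalities $\hsX,\hsUX$ follow from the induction hypothesis by reusing the same witness interval (for existentials) or applying the IH pointwise on every interval in the universal's scope.

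The heart of the argument is the four parameterized-modality shapes. For $\hsX_{\prec u}\psi$ with $u\in\PU$: if $(I,\alpha)\models_w \hsX_{\prec u}\psi$, pick a witness $J$ with $I\,\RelX\,J$, $|J|\prec\alpha(u)$, and $(J,\alpha)\models_w \psi$. Since $\alpha(u)\leq\beta(u)$, we still have $|J|\prec\beta(u)$; by the IH applied to $\psi$, also $(J,\beta)\models_w \psi$, so $J$ witnesses $(I,\beta)\models_w \hsX_{\prec u}\psi$. For $\hsUX_{\succ u}\psi$: any $J$ with $I\,\RelX\,J$ and $|J|\succ\beta(u)$ satisfies $|J|\succ\alpha(u)$, so $(J,\alpha)\models_w \psi$ by the universal, and the IH gives $(J,\beta)\models_w \psi$. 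The two $\PL$-cases $\hsX_{\succ\ell}\psi$ and $\hsUX_{\prec\ell}\psi$ are entirely symmetric, using $\beta(\ell)\leq\alpha(\ell)$ either to preserve the lower bound on a chosen witness or to shrink the universal's scope.

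The main conceptual point, rather than a genuine technical obstacle, is that the syntax of $\PHS$ has been deliberately set up so that each parameter class appears only in modality/comparison combinations whose intrinsic monotonicity matches the class: variables in $\PU$ occur solely as upper bounds on existential modalities or as lower bounds on universal modalities (both upward-monotone positions), and variables in $\PL$ occur dually. Combined with the positive normal form, which prevents any negation from flipping the monotonicity direction of a subformula, this alignment guarantees that the direction of monotonicity propagates unchanged through every constructor of the logic. The first bullet of the proposition is then read off immediately from the four parameterized-operator cases of the induction carried out for the second bullet.
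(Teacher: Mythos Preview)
Your argument is correct: the structural induction on the positive normal form, using the alignment of $\PU$-variables with upward-monotone operator positions and $\PL$-variables with downward-monotone ones, is exactly how one establishes the result rigorously. The paper itself does not spell out a proof; it simply states the proposition and remarks that ``by construction, all the parameterized operators are monotone,'' so your write-up supplies the details the paper leaves implicit, following the same underlying idea.
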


  Note that if we also allow for all $\ell\in \PL$ and $u\in \PU$,
the parameterized modalities $\hsX_{\succ  u}$, $\hsX_{\prec   \ell}$, $\hsUX_{\succ  \ell}$, and
$\hsUX_{\prec   u}$, then the modalities $\hsX_{\sim}\wp$ and $\hsUX_{\sim}\wp$, for $\wp\in \PU\cup \PL$ and $\sim\in \{<,\leq,>,\geq\}$ are dual and have opposite kind of monotonicity. It easily follows that the logic is indeed closed under negation.

\begin{proposition} Given a $\PHS$ formula $\varphi$ with upward (resp., downward) parameters in $\PU$ (resp., $\PL$), one can construct
in linear time a $\PHS$ formula $\overline{\varphi}$ with upward (resp., downward) parameters in $\PL$ (resp., $\PU$) corresponding to the negation of $\varphi$, i.e.~such that for each parameter valuation $\alpha$ and trace $w$ over $2^{\Prop}$,
$(w,\alpha)\models \varphi$ iff $(w,\alpha)\not\models \overline{\varphi}$.
\end{proposition}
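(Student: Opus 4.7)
The plan is an induction on the structure of $\varphi$, with $\overline{\varphi}$ produced by pushing negation down to literals using the De Morgan laws and the dualities between the existential and universal modalities. Because $\PHS$ is already given in positive normal form and includes both flavours of every modality, the construction never needs to place a negation in front of a compound subformula.

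On the non-parameterized part the definition is routine: $\overline{p} = \neg p$ and $\overline{\neg p} = p$, with $\overline{\top}$ taken to be $p \wedge \neg p$ for a fixed $p \in \Prop$; $\overline{\varphi_1 \vee \varphi_2} = \overline{\varphi_1} \wedge \overline{\varphi_2}$ and the dual clause for conjunction; and $\overline{\hsX \psi} = \hsUX \overline{\psi}$ together with $\overline{\hsUX \psi} = \hsX \overline{\psi}$. For the parameterized modalities I would set $\overline{\hsX_{\prec u}\, \psi} = \hsUX_{\prec u} \overline{\psi}$, $\overline{\hsUX_{\succ u}\, \psi} = \hsX_{\succ u} \overline{\psi}$, and symmetrically $\overline{\hsX_{\succ \ell}\, \psi} = \hsUX_{\succ \ell} \overline{\psi}$ and $\overline{\hsUX_{\prec \ell}\, \psi} = \hsX_{\prec \ell} \overline{\psi}$. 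The semantic correctness of each rule is immediate from the clauses of $\PHS$: the test $|J| \sim \alpha(\wp)$ is independent of which quantifier ranges over $J$, so negating the outer modality merely swaps $\exists J$ for $\forall J$.

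The delicate point, and the content of the statement's clause about parameters, is tracking which set each variable lands in. In $\varphi$ every $u \in \PU$ occurs only in $\hsX_{\prec u}$ and $\hsUX_{\succ u}$, the upward-monotone contexts; the rules above show that in $\overline{\varphi}$ the same $u$ occurs exclusively in $\hsUX_{\prec u}$ and $\hsX_{\succ u}$, which by the $\PHS$ grammar are the slots reserved for downward parameters. Symmetrically, every $\ell \in \PL$ passes from a downward to an upward role. Relabelling so that each $\PU$-name is reassigned to $\PL$ and each $\PL$-name to $\PU$ produces a syntactically valid $\PHS$ formula whose upward parameters come from (what was) $\PL$ and downward ones from (what was) $\PU$, exactly as required.

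The size bound is immediate: each rule replaces one node of the syntax tree by a constant-size fragment, so $|\overline{\varphi}|$ is linear in $|\varphi|$ and the transformation is performed in linear time by a single bottom-up pass. The only non-trivial ingredient is the parameter-kind bookkeeping described in the previous paragraph; once that is handled, the semantic equivalence $(w,\alpha)\models\varphi \Leftrightarrow (w,\alpha)\not\models\overline{\varphi}$ reduces in each inductive case to the one-line duality just stated.
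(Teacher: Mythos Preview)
Your proposal is correct and matches the paper's approach: the paper does not give a detailed proof but simply observes, just before the proposition, that the modalities $\hsX_{\sim \wp}$ and $\hsUX_{\sim \wp}$ are dual and have opposite monotonicity, from which ``it easily follows that the logic is indeed closed under negation.'' Your structural induction with De~Morgan duals and the bookkeeping that swaps the roles of $\PU$ and $\PL$ is exactly the unpacking of that remark.
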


We now show that parametric $\LTL$ ($\PLTL$)~\cite{AlurETP01}  can be easily expressed in  $\PHSF{\AB}$. Recall that  $\PLTL$ formulas $\varphi$ over $\Prop$ and the set of parameters $\PU\cup \PL$ are defined as:
\[
\varphi ::= \top \;\vert\;   p \;\vert\; \neg p \;\vert\; \varphi \vee \varphi \;\vert\; \varphi \wedge \varphi \;\vert\; \Next  \varphi  \;\vert\; \varphi \until \varphi \;\vert\;  \Always \varphi \;\vert\;  \Eventually_{\leq   u} \varphi \;\vert\; \Always_{\leq   \ell} \varphi
\]
where $p\in \Prop$, $u\in \PU$, $\ell\in \PL$, $\Next$, $\until$, and $\Always$ are the standard next, until, and always modalities, respectively, and $\Eventually_{\leq   u}$ and $\Always_{\leq   \ell}$  are parameterized versions of the always and eventually modalities. Other parameterized modalities such as $\Eventually_{> \ell}$ or $\Always_{>   u}$ can be easily expressed in the considered logic~\cite{AlurETP01}. For a $\PLTL$ formula $\varphi$,  a trace $w$, a parameter valuation $\alpha$, and a position $i\geq 0$, the satisfaction relation $(w,i,\alpha)\models \varphi$ is defined by induction as follows (we omit the  semantics of $\LTL$ constructs which is standard):
 \[
 \begin{array}{ll}
(w,i, \alpha) \models \Eventually_{\leq u}\varphi  &  \Leftrightarrow  \text{there is some $k\leq \alpha(u)$  such that
 }   (w,i+k, \alpha) \models \varphi;\\
(w,i, \alpha) \models \Always_{\leq \ell}\varphi  &  \Leftrightarrow  \text{for each $k\leq \alpha(\ell)$: }   (w,i+k, \alpha) \models \varphi.
\end{array}
\]

\begin{proposition}\label{prop:FromPLTLtoPAB}
For a $\PLTL$ formula $\varphi$, one can build in linear time a $\PHSF{\AB}$ formula $f(\varphi)$ such that for all traces
$w$, $i\geq 0$, and parameter valuations $\alpha$, $(w,i,\alpha)\models \varphi$ iff $([i,i],\alpha)\models_w f(\varphi)$.
\end{proposition}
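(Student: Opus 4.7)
The plan is to define the translation $f$ by structural induction on the $\PLTL$ formula $\varphi$, letting it commute with $\top$, literals, $\vee$, and $\wedge$, and then prove the claimed equivalence $(w,i,\alpha)\models \varphi \Leftrightarrow ([i,i],\alpha)\models_w f(\varphi)$ by induction on $\varphi$. The base cases reduce to the observation that, for the homogeneous interval structure $\IS(w)$ induced by $w$, membership of $p$ in $\Lab_w([i,i])$ is equivalent to $p\in w(i)$. The central tool for the inductive cases is a navigation principle: starting from a point interval $[i,i]$, the modality $\hsA$ reaches exactly the intervals $[i,z]$ with $z\geq i$, and from such an $[i,z]$ a further $\hsA$ reaches exactly $[z,z']$ with $z'\geq z$; imposing $\len{1}$ at the final step pins down the point interval $[z,z]$ on which the recursive translation is evaluated.

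Using this, I would take $f(\Next\varphi)=\hsA(\len{2}\wedge \hsA(\len{1}\wedge f(\varphi)))$, which routes through $[i,i+1]$ to reach $[i+1,i+1]$; $f(\Always\varphi)=\hsUA\,\hsUA(\len{1}\rightarrow f(\varphi))$, which universally quantifies over every future point $[z,z]$ with $z\geq i$; and for the until case, $f(\varphi\until\psi)=\hsA\bigl(\hsA(\len{1}\wedge f(\psi))\wedge \hsUB\,\hsA(\len{1}\wedge f(\varphi))\bigr)$, where the outer $\hsA$ picks the witness interval $[i,j]$, the first conjunct enforces $\psi$ at the endpoint $j$, and $\hsUB$ ranges over the proper prefixes $[i,j']$ with $i\leq j'<j$ to require $\varphi$ at each endpoint $j'$. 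For the parameterized modalities I set $f(\Eventually_{\leq u}\varphi)=f(\varphi)\vee \hsA(\len{2}\wedge \hsA_{\leq u}\hsA(\len{1}\wedge f(\varphi)))$ and, dually, $f(\Always_{\leq \ell}\varphi)=f(\varphi)\wedge \hsUA(\len{2}\rightarrow \hsUA_{\leq \ell}\hsUA(\len{1}\rightarrow f(\varphi)))$. Since each inductive step introduces only a constant overhead and $|\cdot|$ counts distinct subformulas, the translation is linear in $|\varphi|$ and computable in linear time.

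The main obstacle is an off-by-one gap between $\PLTL$'s position-indexed counting and $\HS$'s length-based counting. Semantically, $\Eventually_{\leq u}\varphi$ at position $i$ ranges over $k\in\{0,\ldots,\alpha(u)\}$, i.e.\ over $\alpha(u)+1$ positions, whereas $\hsA_{\leq u}$ applied at $[i,i]$ reaches only intervals $[i,z]$ of length at most $\alpha(u)$, covering the $\alpha(u)$ endpoints $z\in\{i,\ldots,i+\alpha(u)-1\}$. The fix is to precompose with an unconstrained $\hsA$ step via $\len{2}$ to $[i,i+1]$, thereby shifting the base of the subsequent length bound forward by one and yielding endpoints $z\in\{i+1,\ldots,i+\alpha(u)\}$; the explicit disjunct $f(\varphi)$ then recovers the missing $k=0$ case, and the dual construction handles $\Always_{\leq \ell}$. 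The remaining delicate point in the correctness argument is the until case: one must check the degenerate sub-case $j=i$, where $[i,i]$ has no proper prefixes under the non-strict semantics so the $\hsUB$ is vacuously true and the clause collapses to $\psi$ at $i$, and confirm that, for $j>i$, the proper prefixes of $[i,j]$ selected by $\hsB$ correspond exactly to the endpoints $j'\in\{i,\ldots,j-1\}$ required by the standard semantics of $\until$.
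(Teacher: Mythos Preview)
Your approach coincides with the paper's on the non-parameterized connectives: your clauses for $\Next$ and $\until$ are identical to the paper's, and your clause for $\Always$ is an equivalent variant of the paper's $\hsUA\hsA(\len{1}\wedge f(\varphi))$. The only substantive difference is in the parameterized modalities. The paper simply sets $f(\Eventually_{\leq u}\varphi)=\hsA_{\leq u}\hsA(\len{1}\wedge f(\varphi))$ and $f(\Always_{\leq\ell}\varphi)=\hsUA_{\leq\ell}\hsA(\len{1}\wedge f(\varphi))$, without discussing the off-by-one issue you raise. Your diagnosis is in fact correct: from $[i,i]$ the length constraint $|J|\leq\alpha(u)$ on the $\hsA$-successor $J=[i,z]$ forces $z\leq i+\alpha(u)-1$, so the paper's clause reaches only positions $i,\ldots,i+\alpha(u)-1$ and misses the position $i+\alpha(u)$ that the $\PLTL$ semantics of $\Eventually_{\leq u}$ permits (and dually for $\Always_{\leq\ell}$). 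Your $\len{2}$-shift combined with the explicit $k=0$ disjunct (respectively, conjunct) repairs this and yields a translation that is exact for every parameter valuation, as the proposition requires. In short, you follow the paper's inductive scheme but are more careful on the parametric clauses; the paper's stated translation is only correct up to a uniform shift of the parameter by one.
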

\begin{proof}
The mapping $f : \PLTL \mapsto \PHSF{\AB}$, homomorphic with respect to atomic propositions and Boolean connectives, is
defined as follows:\vspace{-0.2cm}
\[
 \begin{array}{ll}
f(\Next\varphi)  &  \DefinedAs  \hsA(\len{2}\wedge \hsA (\len{1}\wedge \varphi));\\
f(\varphi \until\varphi_2) & \DefinedAs \hsA\bigl(\hsA(\len{1}\wedge f(\varphi_2))\wedge \hsUB\hsA(\len{1}\wedge f(\varphi_1))\bigr);\\
f(\Always\varphi)  &  \DefinedAs  \hsUA \hsA (\len{1}\wedge \varphi) ;\\
f(\Eventually_{\leq u}\varphi)  &  \DefinedAs  \hsA_{\leq u} \hsA (\len{1}\wedge \varphi); \\
f(\Always_{\leq \ell}\varphi)  &  \DefinedAs  \hsUA_{\leq \ell} \hsA (\len{1}\wedge \varphi).\vspace{-0.6cm}
\end{array}
\]
\end{proof}
Note that by Proposition~\ref{prop:FromPLTLtoPAB} and the results in~\cite{AlurETP01}, the relaxation of the assumption $\PU\cap \PL=\emptyset$ or the adding of parameterized operators of the form $\hsX_{=\wp}$ would lead to an undecidable model-checking problem already for the parameterized extension of $\AB$ by just one parameter.\vspace{0.2cm}

\noindent \textbf{Expressively complete fragments.} Two $\PHS$ formulas $\varphi$ and $\psi$ are \emph{strongly equivalent}, denoted by $\varphi\equiv\psi$, if for all traces, intervals $I$ over $\Nat$, and parameter valuations $\alpha$, we have that $(I,\alpha)\models_w \varphi$ iff $(I,\alpha)\models_w \psi$.
We show that the fragment consisting of  $\PHSF{\BBbarBbarWEEbarEbarW}$ formulas with no occurrences of parameterized operators
  $\hsUX_{\succ u}$ is sufficient to capture the full logic $\PHS$.

\begin{proposition}\label{prop:FragmentCompleteForPHS} Given a $\PHS$ formula $\varphi$, one can build in linear time
a strongly equivalent $\PHSF{\BBbarBbarWEEbarEbarW}$ formula $\psi$ with no occurrences of the parameterized operators
$\hsUX_{\succ u}$.
\end{proposition}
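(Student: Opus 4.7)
The plan is to define, by structural induction on $\varphi$, a mapping $\tau$ from $\PHS$ to $\PHSF{\BBbarBbarWEEbarEbarW}$ that is strongly equivalence-preserving and free of $\hsUX_{\succ u}$. Atoms and boolean connectives are treated homomorphically; modalities already in the target fragment other than $\hsUX_{\succ u}$ are kept, with $\tau$ pushed inside their argument. This leaves two non-trivial groups of cases: (i) modalities for the Allen relations in $\{A,L,D,O,\bar A,\bar L,\bar D,\bar O\}$, and (ii) the operator $\hsUX_{\succ u}$ for the six target modalities $X$.

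The key technical tool, which drives both cases, is a length-encoding observation: for every interval $J$ and every $\prec\in\{<,\leq\}$, the formula $\hsBtW_{\prec u}\top$ holds at $J$ iff $|J|\prec\alpha(u)$ (and symmetrically for $\hsEtW_{\prec u}\top$). The reason is that $\hsBtW$ admits $K=J$ as a witness, so $|J|\prec\alpha(u)$ suffices; conversely, every strict $\hsBt$-successor $K$ of $J$ has $|K|>|J|$, ruling out shorter witnesses. This gives a positive-form way to test the \emph{upper} bound $|J|\prec u$ using only target-fragment operators, effectively substituting for the missing operator $\hsUX_{\prec u}$ with an upward parameter.

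For group (ii), the length-encoding yields, for every $X\in\{B,\bar B,\bar B_w,E,\bar E,\bar E_w\}$,
\[
\hsUX_{>u}\varphi \equiv \hsUX(\hsBtW_{\leq u}\top \vee \tau(\varphi)),\qquad \hsUX_{\geq u}\varphi \equiv \hsUX(\hsBtW_{<u}\top \vee \tau(\varphi)),
\]
since the length guard simply excludes those successors $J$ whose length lies outside the constrained range. For group (i), one applies the standard non-strict $\HS$ translations recalled in Section~\ref{sec:preliminary} (e.g.\ $\hsD\varphi\equiv\hsB\hsE\varphi$, $\hsO\varphi\equiv\hsE\hsBt\varphi$, and the end-point construction $\hsA\varphi\equiv(\hsUE\neg\top\wedge(\varphi\vee\hsBt\varphi))\vee\hsE(\hsUE\neg\top\wedge(\varphi\vee\hsBt\varphi))$, with symmetric forms for $\hsAt,\hsLt,\hsDt,\hsOt$ and the boolean duals for the universal versions, e.g.\ $\hsUD\varphi\equiv\hsUB\hsUE\varphi$). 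Parametric length constraints are placed on the innermost modality that pins down the final interval $J$: for example $\hsD_{\prec u}\varphi\equiv\hsB\hsE_{\prec u}\tau(\varphi)$, $\hsO_{\prec u}\varphi\equiv\hsE\hsBt_{\prec u}\tau(\varphi)$, and $\hsA_{\prec u}\varphi\equiv(\hsUE\neg\top\wedge\hsBtW_{\prec u}\tau(\varphi))\vee\hsE(\hsUE\neg\top\wedge\hsBtW_{\prec u}\tau(\varphi))$, and analogously for $\hsY_{\succ\ell}$ with $\ell\in\PL$. Any $\hsUX_{\succ u}$ that appears after translating a universal dual is removed by applying (ii).

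Linearity holds because every case substitutes $\tau(\varphi)$ into a constant-size template; even when a template textually repeats $\tau(\varphi)$ (as in the $\hsA$-translation), the repeated occurrence is syntactically the same subformula and contributes only $O(1)$ new distinct subformulas, so the number of distinct subformulas of $\tau(\varphi)$ remains linear in $|\varphi|$. The main obstacle I anticipate is the case analysis for the universal duals of the end-point modalities $\hsA,\hsAt,\hsL,\hsLt$: their non-strict translations must guard against the unit-length interval case, yielding shapes such as $\hsUA\varphi\equiv(\hsUE\neg\top\wedge\varphi\wedge\hsUBt\varphi)\vee(\hsE\top\wedge\hsUE(\hsE\top\vee(\varphi\wedge\hsUBt\varphi)))$, and one must verify correctness under all parameter valuations once the length-guards introduced in (ii) are combined with these nested templates.
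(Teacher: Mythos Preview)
Your proposal is correct and follows the same two-step scheme as the paper: first express the non-target Allen modalities in $\BBbarBbarWEEbarEbarW$ with the parametric constraint carried by the innermost modality, then eliminate each $\hsUX_{\succ u}$ via the length test $\hsBtW_{\prec u}\top$ (your uniform template $\hsUX_{\succ u}\varphi\equiv\hsUX(\hsBtW_{\prec u}\top\vee\varphi)$ is in fact slightly cleaner than the paper's, which uses a separate shape $\hsBt_{\prec u}(\hsUBt\varphi)\vee\hsUBt\varphi$ for $X\in\{\overline{B},\overline{E},\overline{B}_w,\overline{E}_w\}$). One small slip: your example $\hsO\varphi\equiv\hsE\hsBt\varphi$ needs a non-singleton guard on the intermediate interval---the paper writes $\hsO_{\sim\wp}\varphi\equiv\hsE(\neg\len{1}\wedge\hsBt_{\sim\wp}\varphi)$---since otherwise the degenerate case $v=y$ wrongly admits $\RelA$-successors.
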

\begin{proof} We first show that the fragment $\PHSF{\BBbarBbarWEEbarEbarW}$ is expressively complete for $\PHS$.
The strong equivalences exploited for expressing all the $\HS$ modalities in terms of the modalities in
the fragment  $\BBbarEEbar$ can be trivially adapted to the parameterized setting. Here, we illustrate the equivalences for the existential parameterized operators where $\sim\in\{<,\leq,>,\geq\}$ and $\wp\in \PU\cup \PL$:
 \[
 \begin{array}{lll}
 \hsA_{\sim  \wp} \varphi  &  \equiv & (\len{1}\wedge \hsBt_{\sim  \wp}\varphi)\vee \hsE (\len{1}\wedge \hsBt_{\sim  \wp}\varphi);\\
  \hsAt_{\sim  \wp} \varphi  &  \equiv & (\len{1}\wedge \hsEt_{\sim  \wp}\varphi) \vee \hsB (\len{1}\wedge \hsEt_{\sim  \wp}\varphi);\\
  \hsL_{\sim  \wp} \varphi  &  \equiv & \hsBt\hsE (\len{1}\wedge \hsBt_{\sim  \wp}\varphi);\\
  \hsLt_{\sim  \wp} \varphi  &  \equiv & \hsEt\hsB (\len{1}\wedge \hsEt_{\sim  \wp}\varphi);\\
   \hsD_{\sim  \wp} \varphi  &  \equiv & \hsB\hsE_{\sim \wp}\varphi;\\
  \hsDt_{\sim  \wp} \varphi  &  \equiv & \hsBt\hsEt_{\sim \wp}\varphi;\\
  \hsO_{\sim  \wp} \varphi  &  \equiv & \hsE (\neg\len{1}\wedge \hsBt_{\sim  \wp}\varphi);\\
    \hsOt_{\sim  \wp} \varphi  &  \equiv & \hsB (\neg\len{1}\wedge \hsEt_{\sim  \wp}\varphi).
\end{array}
\]
 It remains to show that for the fragment $\PHSF{\BBbarBbarWEEbarEbarW}$, the universal upward parameterized operators can be expressed in terms of the other modalities.  One can easily show that the following strong equivalences
  hold, where $\prec$ is $<$ (resp., $\prec$ is $\leq$) and $\succ$ is $\geq$ (resp., $\succ$ is $>$). Hence, the result follows.
 \[
 \begin{array}{lll}
 \hsUB_{\succ  u} \varphi  &  \equiv & \hsUB (  \varphi \vee \hsBtW_{\prec u}\top );\\
 \hsUE_{\succ  u} \varphi  &  \equiv & \hsUE (  \varphi \vee\hsEtW_{\prec u}\top );\\
 \hsUBt_{\succ  u} \varphi  &  \equiv & \hsBt_{\prec u}(\hsUBt \varphi) \vee  \hsUBt \varphi;\\
 \hsUEt_{\succ  u} \varphi  &  \equiv & \hsEt_{\prec u}(\hsUEt \varphi) \vee  \hsUEt \varphi; \\
 \hsUBtW_{\succ  u} \varphi  &  \equiv & \hsBtW_{\prec u}(\hsUBt \varphi) \vee  \hsUBtW \varphi; \\
 \hsUEtW_{\succ  u} \varphi  &  \equiv & \hsEtW_{\prec u}(\hsUEt \varphi) \vee  \hsUEtW \varphi.\vspace{-0.6cm}
\end{array}
\]
\end{proof}

For the logic $\PHSF{\ABBbarBbarW}$, we obtain a similar result.

\begin{proposition}\label{prop:FragmentCompleteForPABBbar} Given a $\PHSF{\ABBbarBbarW}$ formula $\varphi$, one can build in linear time
a strongly equivalent $\PHSF{\ABBbarBbarW}$ formula $\psi$ with no occurrences of the parameterized operators
$\hsUX_{\succ u}$.
\end{proposition}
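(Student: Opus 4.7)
The plan is to mirror the strategy of Proposition~\ref{prop:FragmentCompleteForPHS}: exhibit, for each universal upward parameterized operator $\hsUX_{\succ u}$ with $X \in \{A, B, \overline{B}, \overline{B}_w\}$, a strongly equivalent formula that uses only modalities of $\PHSF{\ABBbarBbarW}$ together with downward-parameterized variants, and then apply these equivalences as local rewriting rules in a single bottom-up pass. Since each rewriting introduces a constant number of new nodes while the argument subformula $\varphi$ can be shared, the overall transformation runs in linear time with respect to $|\varphi|$.

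For the three cases $X \in \{B, \overline{B}, \overline{B}_w\}$, the equivalences already proved in Proposition~\ref{prop:FragmentCompleteForPHS}, namely
\[
\hsUB_{\succ u} \varphi \equiv \hsUB(\varphi \vee \hsBtW_{\prec u}\top), \qquad \hsUBt_{\succ u} \varphi \equiv \hsBt_{\prec u}(\hsUBt \varphi) \vee \hsUBt \varphi,
\]
\[
\hsUBtW_{\succ u} \varphi \equiv \hsBtW_{\prec u}(\hsUBt \varphi) \vee \hsUBtW \varphi,
\]
can be reused verbatim, since the right-hand sides mention only the modalities $\hsB,\hsUB,\hsBt,\hsUBt,\hsBtW,\hsUBtW$ (with downward-parameterized variants), all of which lie within $\PHSF{\ABBbarBbarW}$.

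The only genuinely new case is $\hsUA_{\succ u}$, for which I propose the equivalence
\[
\hsUA_{\succ u}\varphi \;\equiv\; \hsUA\bigl(\varphi \vee \hsBtW_{\prec u}\top\bigr),
\]
where $(\prec,\succ)$ is the complementary pair, i.e.\ $(<,\geq)$ or $(\leq,>)$. The key observation is that for any interval $J$, the formula $\hsBtW_{\prec u}\top$ precisely captures the constraint $|J| \prec \alpha(u)$: every $J' \in \Intvs(\LO)$ with $J\,\RelBtW\,J'$ satisfies $|J'| \geq |J|$, so a weak right-extension of length $\prec \alpha(u)$ exists iff $J$ itself has length $\prec \alpha(u)$ (witnessed by $J' = J$). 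Given this characterization, the right-hand side states that every $\RelA$-successor $J$ of the current interval satisfies $\varphi$ or $|J| \prec \alpha(u)$, which is exactly the contrapositive reformulation of $\hsUA_{\succ u}\varphi$ because $\prec$ is the negation of $\succ$ on $\NatP$.

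The main subtle point is the justification that $\hsBtW_{\prec u}\top$ characterizes length constraints on the current interval itself (rather than on some strictly larger extension); once that is verified, correctness of all four equivalences is a direct semantic check, and the linear-time bound follows from the constant-size blow-up per rewriting step together with subformula sharing.
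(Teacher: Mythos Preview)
Your proof is correct and follows the same overall strategy as the paper: reuse the three equivalences for $X\in\{B,\overline{B},\overline{B}_w\}$ from Proposition~\ref{prop:FragmentCompleteForPHS} and supply one new equivalence for $\hsUA_{\succ u}$. The only difference lies in that new equivalence. The paper opts for
\[
\hsUA_{\succ u}\varphi \;\equiv\; \hsA_{\prec u}\hsUBt\varphi,
\]
in the style of the $\hsUBt/\hsUEt$ clauses of Proposition~\ref{prop:FragmentCompleteForPHS}, whereas you choose $\hsUA_{\succ u}\varphi \equiv \hsUA(\varphi \vee \hsBtW_{\prec u}\top)$, in the style of the $\hsUB/\hsUE$ clauses. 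Your version has the advantage of reusing directly the length-characterization $\hsBtW_{\prec u}\top \Leftrightarrow |J|\prec\alpha(u)$ already justified there, and it is uniformly valid at the boundary $\alpha(u)=1$ (where, for $\succ={\geq}$ and $\prec={<}$, the paper's right-hand side requires an $\RelA$-successor of length $<1$, which does not exist, so the equivalence breaks). The paper's version, by contrast, stays within $\ABBbar$ without needing the weak modality $\hsBtW$. Either choice suffices for the proposition as stated.
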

\begin{proof}
The result directly follows from the strong equivalences provided in the proof of Proposition~\ref{prop:FragmentCompleteForPHS}
and the following one, where $\prec$ is $<$ (resp., $\prec$ is $\leq$) and $\succ$ is $\geq$ (resp., $\succ$ is $>$):
  $\hsUA_{\succ  u} \varphi   \,\,  \equiv \,\, \hsA_{\prec  u}\hsUBt\varphi. $
\end{proof}

By the monotonicity of the parameterized modalities and Propositions~\ref{prop:FragmentCompleteForPHS} and~\ref{prop:FragmentCompleteForPABBbar}, we can eliminate all the parameterized modalities, but the existential upward ones,
for solving the model-checking and satisfiability problems against $\PHS$ (resp., $\PHSF{\ABBbarBbarW}$).

\begin{lemma}\label{lemma:ReductionToPrompness} Model checking $\PHS$ (resp., $\PHSF{\ABBbarBbarW}$)
can be reduced in linear time to model checking $\PromptHS$ (resp., $\PromptF{\ABBbarBbarW}$). Similarly,
satisfiability of $\PHS$ (resp., $\PHSF{\ABBbarBbarW}$)
can be reduced in linear time to satisfiability of $\PromptHS$ (resp., $\PromptF{\ABBbarBbarW}$).
\end{lemma}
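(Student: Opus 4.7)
The plan is to combine three ingredients already developed in the excerpt: the fragment-completeness Propositions~\ref{prop:FragmentCompleteForPHS} and~\ref{prop:FragmentCompleteForPABBbar}, which eliminate the universal upward parameterized operators; the monotonicity Proposition~\ref{prop:MonotonicityPHS}, which allows us to fix all downward parameters to their minimum admissible value; and a purely syntactic unfolding of the remaining downward-parameterized modalities once $\ell$ is instantiated at $1$. The argument is the same for the full logic and for its fragment, provided the rewrites stay inside $\ABBbarBbarW$ in the latter case.

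Given an input formula $\varphi$, I would first apply Proposition~\ref{prop:FragmentCompleteForPHS} (for $\PHS$) or Proposition~\ref{prop:FragmentCompleteForPABBbar} (for $\PHSF{\ABBbarBbarW}$) in linear time to obtain a strongly equivalent formula $\varphi_1$ whose only parameterized modalities are $\hsX_{\prec u}$ (already admitted by $\PromptHS$) and the two downward-parameterized ones $\hsX_{\succ \ell}$ and $\hsUX_{\prec \ell}$. By Proposition~\ref{prop:MonotonicityPHS}, both $V(\Ku, \varphi_1)$ and $S(\varphi_1)$ are downward-closed in the coordinates indexed by $\PL$, so each is non-empty iff it contains a valuation $\alpha$ with $\alpha(\ell) = 1$ for every $\ell \in \PL$. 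It therefore suffices to hard-code $\ell := 1$ into $\varphi_1$, replacing each downward-parameterized subformula by its semantic instantiation at $1$. Exploiting $\len{1} \equiv \hsUB\neg\top$ and $\neg\len{1} \equiv \hsB\top$, the required rewrites are
\[
\begin{array}{lll}
\hsX_{> \ell}\,\psi & \leadsto & \hsX(\hsB\top \wedge \psi),\\
\hsX_{\geq \ell}\,\psi & \leadsto & \hsX\,\psi,\\
\hsUX_{< \ell}\,\psi & \leadsto & \top,\\
\hsUX_{\leq \ell}\,\psi & \leadsto & \hsUX(\hsUB\neg\top \rightarrow \psi),
\end{array}
\]
each rewrite being sound because every interval over $\Nat$ has length at least $1$. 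Applied recursively, this yields a formula $\varphi_2 \in \PromptHS$ (resp., $\PromptF{\ABBbarBbarW}$) computable in linear time from $\varphi$, with $V(\Ku, \varphi) \neq \emptyset$ iff $V(\Ku, \varphi_2) \neq \emptyset$, and analogously for $S$.

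The only point requiring verification is that the substitution preserves the fragment when the input is in $\PHSF{\ABBbarBbarW}$. This is automatic: the only auxiliary modalities introduced by the rewrites are $\hsB$ and $\hsUB$, both of which belong to $\ABBbarBbarW$. There is no genuine obstacle here — once Propositions~\ref{prop:MonotonicityPHS},~\ref{prop:FragmentCompleteForPHS}, and~\ref{prop:FragmentCompleteForPABBbar} are in hand, the reduction is essentially bookkeeping, and its purpose is to let subsequent complexity bounds be proved directly for the promptness-style fragments.
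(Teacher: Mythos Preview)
Your proposal is correct and follows essentially the same route as the paper: apply Propositions~\ref{prop:FragmentCompleteForPHS}/\ref{prop:FragmentCompleteForPABBbar} to eliminate the $\hsUX_{\succ u}$ modalities, invoke monotonicity (Proposition~\ref{prop:MonotonicityPHS}) to fix every downward parameter at $1$, and then perform the four syntactic rewrites---your rewrites coincide with the paper's up to the cosmetic use of $\hsB\top$ and $\hsUB\neg\top$ in place of $\neg\len{1}$ and $\len{1}$. Your explicit check that the rewrites stay inside $\ABBbarBbarW$ is a nice addition that the paper leaves implicit.
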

\begin{proof} Let $\varphi$ be a $\PHS$ (resp., $\PHSF{\ABBbarBbarW}$) formula. By Propositions~\ref{prop:FragmentCompleteForPHS} and~\ref{prop:FragmentCompleteForPABBbar}, we can assume that $\varphi$ does not contain occurrences of parameterized operators
of the form $\hsUX_{\succ u}$. Let $f(\varphi)$ be the $\PromptHS$ (resp., $\PromptF{\ABBbarBbarW}$) formula intuitively obtained from $\varphi$ by replacing each occurrence of a downward parameter $\ell$ with the constant $1$. Formally, $f(\varphi)$ is homomorphic w.r.t.~all the constructs but the downward parameterized modalities and:
\begin{compactitem}
  \item $f(\hsX_{\geq \ell} \varphi)\DefinedAs \hsX  f(\varphi)$;
  \item $f(\hsX_{> \ell} \varphi)\DefinedAs \hsX(\neg\len{1}\wedge f(\varphi))$;
  \item $f(\hsUX_{\leq \ell}\varphi)\DefinedAs \hsUX (\neg \len{1}\vee f(\varphi))$;
  \item $f(\hsUX_{<\ell}\varphi)\DefinedAs  \top$.
\end{compactitem}
As for the model checking problem, we show that $V(\Ku,\varphi)\neq \emptyset$
iff $V(\Ku,f(\varphi))\neq \emptyset$ for each Kripke structure $\Ku$. Let $\alpha_1$ be a parameter valuation such that $\alpha_1(\ell)=1$ for each downward parameter
$\ell\in L$.
By construction,   
for all traces $w$,
$(w,\alpha_1)\models  \varphi$ iff $(w,\alpha_1)\models f(\varphi)$.
Hence, $V(\Ku,f(\varphi))\neq \emptyset$
implies that  $V(\Ku,\varphi)\neq \emptyset$. On the other hand, if $V(\Ku,\varphi)\neq \emptyset$, there is a parameter valuation
$\alpha$ such that for each trace $w$ of $\Ku$, $(w,\alpha)\models \varphi$. Let $\alpha_1$ be defined as: $\alpha_1(u)=\alpha(u)$
for each $u\in U$, and $\alpha_1(\ell)= 1$ for each $\ell\in L$. By Proposition~\ref{prop:MonotonicityPHS}, it follows that  for each trace $w$ of $\Ku$, $(w,\alpha_1)\models \varphi$. Thus,  we obtain that $V(\Ku,f(\varphi))\neq \emptyset$ as well, and the result for the model-checking problem follows. The result for the satisfiability problem is similar.
\end{proof}

\section{Decision procedures for $\PHS$}\label{sec:DecisionProcedurePHS}

In this section, we first provide a translation of $\HS$  formulas into equivalent B\"{u}chi $\NFA$ (asymptotically optimal for $\ABBbarBbarW$ formulas), by exploiting as an intermediate step a translation of $\HS$  formulas into equivalent formulas of \emph{linear-time hybrid logic $\HL$}~\cite{FRS03,SW07,BozzelliL10} (Subsection~\ref{sec:LinearTimeHybridLogic}). Then, in Subsection~\ref{sec:SatisfiabilityModelCheckingPHS}, we apply the results of  Subsection~\ref{sec:LinearTimeHybridLogic} and the alternating color technique
for Prompt $\LTL$~\cite{KupfermanPV09} in order to solve
satisfiability and model checking against $\PHS$ and $\PHSF{\ABBbarBbarW}$. In particular, for the logic
$\PHSF{\ABBbarBbarW}$, we show that the considered problems are \EXPSPACE-complete.

\subsection{Translation of $\HS$ in linear-time Hybrid Logic}\label{sec:LinearTimeHybridLogic}

In this section,  we recall the  linear-time hybrid logic $\HL$~\cite{FRS03,SW07,BozzelliL10}, which extends standard  $\LTL$ + past  by first-order concepts. We show that while $\HS$ can be translated into the two-variable fragment of $\HL$, for the logic $\ABBbarBbarW$,  it suffices to consider the one-variable fragment $\HL_1$ of $\HL$. Thus, by exploiting known results on $\HL_1$~\cite{SW07,BozzelliL10}, we obtain an asymptotically optimal automata-theoretic approach for $\ABBbarBbarW$ of elementary complexity.  \vspace{0.1cm}

 \noindent \textbf{Syntax and semantics of $\HL$.} Given  a  set $X$ of (position) variables,
the set of $\HL$ formulas $\varphi$ over $\Prop$ and $X$ is defined by the following syntax:
\[
\varphi \DefinedAs \top \ |\ p  \ |\  x  \ |\ \neg\,\varphi \ |\ \varphi\, \wedge\, \varphi \ |\
  \Next  \varphi \ |\ \PNext  \varphi \ |\
  \Eventually \varphi\ |\ \PEventually \varphi  \ |\ \DBinder. \varphi
\]
\noindent   $p \in \Prop$, $x\in X$, $\PNext$ and   $\PEventually$  are the past counterparts of the next modality $\Next$ and
the eventually modality $\Eventually$, respectively,  and  $\DBinder$ is the \emph{downarrow binder} operator which assigns the variable name $x$ to the current position. We denote by $\HL_1$ (resp., $\HL_2$) the one-variable (resp., two-variable) fragment of
$\HL$.  An $\HL$ \emph{sentence} is a formula where each variable $x$ is not free (i.e., occurs in the scope of a binder modality $\DBinder$). The size $|\varphi|$ of an $\HL$ formula $\varphi$ is the number of distinct subformulas of $\varphi$.

$\HL$  is interpreted over traces $w$. A \emph{valuation} $g$ is a mapping  assigning to each variable  a position $i\geq 0$. The satisfaction relation
$(w,i,g)\models \varphi$, meaning that $\varphi$ holds at position $i$ along $w$ w.r.t. the valuation $g$, is inductively  defined as follows (we omit the semantics of $\LTL$ constructs which is standard):
\[
\begin{array}{ll}
 (w,i,g)\models x &  \Leftrightarrow i=g(x)\\
 (w,i,g)\models  \DBinder.\varphi & \Leftrightarrow
                 (w,i,g[x \mapsto i])\models \varphi
\end{array}
\]

\noindent where $g[x\mapsto i](x)=i$ and $g[x\mapsto i](y)=g(y)$ for $y\neq x$. Thus,  $\DBinder$  binds the variable $x$ to the current position. Note that the satisfaction relation  depends only on the values assigned to the variables occurring free  in the given formula $\varphi$. We write $(w,i)\models \varphi$ to mean that
$(w,i,g_0)\models \varphi$, where $g_0$ maps each variable to position 0, and $w\models \varphi$ to mean that $(w,0)\models \varphi$.   Note that
  $\HL$ formulas can be trivially translated into equivalent formulas of first-order logic $\FO$ over traces and $\LTL$ formulas can be trivially translated into equivalent $\HL$ formulas. Thus, by the first-order expressiveness completeness of $\LTL$, $\HL$ and $\LTL$ have the same expressiveness~\cite{FRS03}.\vspace{0.1cm}

 \noindent \textbf{Translation of $\HS$ into $\HL$.} We establish the following result.

 \begin{proposition}\label{prop:FromHStoHL}
 Given an $\HS$ (resp., $\ABBbarBbarW$) formula $\varphi$, one can construct in linear-time a two-variable (resp., one-variable)
 sentence $\HL$ $\varphi'$ such that $\Lang(\varphi)=\Lang(\varphi')$.
 \end{proposition}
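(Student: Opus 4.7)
The plan is to proceed by structural induction on $\varphi$. For full $\HS$, I would define a translation $T_{x,y}(\varphi)$ producing an $\HL$ formula with free variables in $\{x,y\}$ such that, for every trace $w$, position $i$, and valuation $g$ with $g(x)\le g(y)$, $(w,i,g)\models T_{x,y}(\varphi)$ iff $[g(x),g(y)]\models_w\varphi$. All clauses will use $\PEventually\Eventually$ as a universal position quantifier over the trace, so the truth of $T_{x,y}(\varphi)$ will depend only on $g$, not on $i$. The final sentence is then $\varphi'\DefinedAs\,\downarrow x.\,\downarrow y.\,T_{x,y}(\varphi)$, evaluated at position~$0$, which encodes the initial unit interval $[0,0]$ and thereby characterizes $\Lang(\varphi)$.

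The crux is the atomic case, where homogeneity gives $p\in\Lab_w([g(x),g(y)])$ iff $p$ holds at every position $z$ in $[g(x),g(y)]$. Observing that $\PEventually x$ holds at a position $z$ exactly when $z\ge g(x)$ and $\Eventually y$ exactly when $z\le g(y)$, this condition is captured by
\[
T_{x,y}(p)\;\DefinedAs\;\neg\,\PEventually\Eventually\bigl(\neg p\wedge\PEventually x\wedge\Eventually y\bigr);
\]
Boolean connectives are homomorphic. For the Allen modalities, the pattern is to quantify existentially over a new endpoint with $\PEventually\Eventually$, enforce the ordering constraint relative to the current valuation using an appropriate combination of $\PEventually\cdot$, $\Eventually\cdot$, $\Next\Eventually\cdot$, $\PNext\PEventually\cdot$, and rebind the modified endpoint via the downarrow binder. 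For instance, $T_{x,y}(\hsB\varphi)\DefinedAs\PEventually\Eventually(\PEventually x\wedge\Next\Eventually y\wedge\downarrow y.\,T_{x,y}(\varphi))$ selects $z$ with $g(x)\le z<g(y)$ and rebinds $y:=z$; the cases for $\hsE,\hsBt,\hsEt$ are analogous, $\hsA$ and $\hsAt$ rebind both variables (which is what forces the use of two variables in general), and the remaining $\hsL,\hsD,\hsO$ and their inverses can either be handled by the same pattern or first eliminated via the strong equivalences for the non-strict semantics of $\HS$ recalled in Section~\ref{sec:preliminary}. Each clause contributes only a constant-size wrapper around the recursive call, giving $|T_{x,y}(\varphi)|=O(|\varphi|)$.

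For the fragment $\ABBbarBbarW$ a sharper invariant allows a single variable. Since $\hsA$ shifts the left endpoint to the previous right endpoint while $\hsB,\hsBt,\hsBtW$ leave it unchanged, I would let the current $\HL$ evaluation position itself represent the right endpoint and retain only a single variable $x$ for the left endpoint, enforcing $(w,j,g)\models T_x(\varphi)$ iff $[g(x),j]\models_w\varphi$ (with implicit side condition $g(x)\le j$). The clauses become $T_x(p)\DefinedAs\neg\PEventually(\neg p\wedge\PEventually x)$, $T_x(\hsA\varphi)\DefinedAs\,\downarrow x.\,\Eventually\,T_x(\varphi)$, $T_x(\hsB\varphi)\DefinedAs\PNext\PEventually(\PEventually x\wedge T_x(\varphi))$, $T_x(\hsBt\varphi)\DefinedAs\Next\Eventually\,T_x(\varphi)$, and $T_x(\hsBtW\varphi)\DefinedAs\Eventually\,T_x(\varphi)$; the final sentence is $\downarrow x.\,T_x(\varphi)$, again linear in $|\varphi|$.

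Correctness of both translations will follow by a straightforward induction on $\varphi$ using the invariants above. The main delicate point I expect is the atomic case, where one must verify that the nested past/future modalities range precisely over the positions in $[g(x),g(y)]$ (resp.~$[g(x),j]$), and, for the $T_x$ translation of $\hsB$, that combining $\PNext\PEventually$ with $\PEventually x$ captures exactly the strict constraint $g(x)\le z<j$ including the boundary cases $j=0$ and $j=g(x)$, where the interval admits no proper prefix and the formula must correctly evaluate to false.
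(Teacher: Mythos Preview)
Your proposal is correct and, for the one-variable fragment $\ABBbarBbarW$, essentially identical to the paper's proof: the same invariant (current position $=$ right endpoint, the single variable $x$ $=$ left endpoint) and literally the same clauses for $p$, $\hsA$, $\hsB$, $\hsBt$.

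For full $\HS$ the overall strategy coincides (an inductive translation where two variables track the interval endpoints, with a constant-size wrapper per modality), but the technical device differs. The paper first reduces to $\BBbarEEbar$ and maintains the invariant that the \emph{current position equals the left endpoint} $g(x_L)$; each clause therefore ends by navigating back to (or landing on) the new left endpoint before the recursive call. You instead make every translated formula \emph{position-independent} by wrapping each existential step in $\PEventually\Eventually$, so the current position is semantically irrelevant and only the valuation of $x,y$ matters. Both invariants work and both give linear size; the paper's choice avoids the extra $\PEventually\Eventually$ prefix at each step, while yours is arguably more uniform (no need to re-synchronise the current position) and lets you treat $\hsA,\hsAt$ directly rather than via the $\BBbarEEbar$ encoding.
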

 \begin{proof} We first consider full $\HS$. We can restrict ourselves to consider the fragment $\BBbarEEbar$ of $\HS$ since
 all temporal modalities in $\HS$ can be expressed in $\BBbarEEbar$  by a linear-time translation. Fix two distinct variables $x_L$ and $x_R$. We define a mapping
 $f: \BBbarEEbar \mapsto \HL_2$ assigning to each $\BBbarEEbar$ formula $\varphi$ a $\HL_2$ formula $f(\varphi)$ with variables $x_L$ and $x_R$ which occur free in $f(\varphi)$. Intuitively, in the translation, $x_L$ and $x_R$ refer to the left and right endpoints of the current interval in $\Nat$, while the current position corresponds to the left endpoint of the current interval.
  Formally,  the mapping $f$ is homomorphic w.r.t.~the Boolean connectives and is inductively defined as follows:

  \begin{compactitem}
  \item $f(p)\DefinedAs \Always(\Eventually x_r \rightarrow p)$;
  \item $f(\hsB\varphi)     \DefinedAs  \Eventually (\Next\Eventually x_R \wedge \DBinder_R.\, \PEventually (x_L\wedge f(\varphi)))$;
    \item $f(\hsBt\varphi)    \DefinedAs  \Eventually (x_R \wedge \Next\Eventually \DBinder_R.\, \PEventually (x_L\wedge f(\varphi)))$;
  \item $f(\hsE\varphi)     \DefinedAs  \Next\Eventually (\Eventually x_R \wedge \DBinder_L.\, f(\varphi))$;
   \item $f(\hsEt\varphi)     \DefinedAs  \Next\PEventually  \DBinder_L.\, f(\varphi)$.
 \end{compactitem}

 By a straightforward induction on $\varphi$, we obtain that given a trace $w$, an interval $[i,j]$, a valuation $g$ such that
 $g(x_L)=i$ and $g(x_R)=j$, it holds that $[i,j]\models_w \varphi$ if and only if $(w,i,g)\models f(\varphi)$. The desired
 $\HL_2$ sentence $\varphi'$ equivalent to $\varphi$ is then defined as follows: $\varphi'\DefinedAs \DBinder_L.\, \DBinder_R. \,f(\varphi)$.\vspace{0.1cm}

  We now consider the logic  $\ABBbarBbarW$. We can restrict ourselves to consider the fragment $\ABBbar$ of $\HS$ since
the modality $\hsBtW$ can be trivially expressed in terms of $\hsBt$.   Fix a variable  $x$. We define a mapping
 $h: \ABBbar \mapsto \HL_1$ assigning to each $\ABBbar$ formula $\varphi$ an $\HL_1$ formula $h(\varphi)$ with one variable  $x$,  which occurs free in $h(\varphi)$. Intuitively, in the translation, $x$   refers to the left  endpoint  of the current interval in $\Nat$, while the current position corresponds to the right endpoint of the current interval.
  Formally,  the mapping $h$ is homomorphic w.r.t.~the Boolean connectives and is inductively defined as follows:

  \begin{compactitem}
  \item $h(p)\DefinedAs \neg\PEventually(\PEventually x  \wedge \neg p)$;
  \item $h(\hsA) \DefinedAs \DBinder.\, \Eventually h(\varphi)$;
 \item $h(\hsB\varphi)     \DefinedAs  \PNext\PEventually (h(\varphi) \wedge \PEventually x)$;
  \item $h(\hsBt\varphi)    \DefinedAs  \Next\Eventually  h(\varphi)$.
 \end{compactitem}

 By a straightforward induction on $\varphi$, we can prove that given a trace $w$, an interval $[i,j]$, a valuation $g$ such that
 $g(x)=i$, it holds that $[i,j]\models_w \varphi$ if and only if $(w,j,g)\models h(\varphi)$. The desired
 $\HL_1$ sentence $\varphi'$ equivalent to $\varphi$ is then defined as follows: $\varphi'\DefinedAs \DBinder. \,h(\varphi)$.
 \end{proof}

 It is known that  $\HL_2$ is already non-elementarily decidable~\cite{SW07} and for an $\HL$ formula $\varphi$, one can construct
 a B\"{u}chi $\NFA$ accepting $\Lang(\varphi)$ whose size is a tower of exponentials having height equal to the nesting depth of the binder modality plus one~\cite{BozzelliL10}. For the one-variable fragment $\HL_1$ of $\HL$, one can do much better~\cite{BozzelliL10}: the size of the B\"{u}chi $\NFA$ equivalent to a $\HL_1$ formula $\varphi$ has size doubly exponential in the size of $\varphi$.
 Hence, by Proposition~\ref{prop:FromHStoHL}, we obtain the following result.

 \begin{proposition}\label{prop:FromHStoBuchiNFA}
 Given an $\HS$ formula $\varphi$, one can build a B\"{u}chi $\NFA$ $\Au_\varphi$ accepting $\Lang(\varphi)$. Moreover, if
 $\varphi$ is a  $\ABBbarBbarW$ formula, then $\Au_\varphi$ has size doubly exponential in the size of $\varphi$.
 \end{proposition}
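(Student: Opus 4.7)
The plan is to derive this proposition as an essentially immediate consequence of Proposition~\ref{prop:FromHStoHL} combined with the cited automata-theoretic constructions for linear-time hybrid logic in~\cite{SW07,BozzelliL10}. No new combinatorial work is required; the proof is a composition of two translations, and the main task is to track how the sizes multiply.

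First, given an arbitrary $\HS$ formula $\varphi$, I would invoke Proposition~\ref{prop:FromHStoHL} to obtain in linear time an equivalent $\HL_2$ sentence $\varphi'$ with $|\varphi'| = \Op(|\varphi|)$ and $\Lang(\varphi') = \Lang(\varphi)$. Then I would apply the general construction of~\cite{BozzelliL10}, which yields, for any $\HL$ sentence $\psi$, a B\"{u}chi $\NFA$ accepting $\Lang(\psi)$ whose size is a tower of exponentials of height one more than the nesting depth of $\DBinder$. Composing, this gives the B\"{u}chi $\NFA$ $\Au_\varphi$ for full $\HS$ and establishes the first assertion of the proposition (without any explicit complexity claim, since in the $\HS$ case the binder depth of $\varphi'$ can grow with the nesting of $\HS$ modalities).

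For the second assertion, I would apply the $\ABBbarBbarW$-specific half of Proposition~\ref{prop:FromHStoHL}: every $\ABBbarBbarW$ formula $\varphi$ translates in linear time into an \emph{$\HL_1$} sentence $\varphi'$ using a single position variable. For this one-variable fragment, the sharper bound of~\cite{BozzelliL10} applies: one can build a B\"{u}chi $\NFA$ for $\Lang(\varphi')$ whose size is doubly exponential in $|\varphi'|$. Since $|\varphi'| = \Op(|\varphi|)$, the resulting automaton $\Au_\varphi$ has size doubly exponential in $|\varphi|$, as required.

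There is no real obstacle here; the only thing to verify is that the translations of Proposition~\ref{prop:FromHStoHL} really land in $\HL_1$ for $\ABBbarBbarW$ (so that the better $2$-\textsc{Exp} upper bound of~\cite{BozzelliL10} is applicable rather than the non-elementary one for $\HL_2$), and that composing a linear-size translation with a doubly exponential automata construction still yields a doubly exponential automaton --- both of which are immediate by inspection. Thus the proof reduces to a one-line citation of~\cite{BozzelliL10} on top of Proposition~\ref{prop:FromHStoHL}.
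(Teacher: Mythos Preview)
Your proposal is correct and follows essentially the same approach as the paper: the paper derives the proposition directly from Proposition~\ref{prop:FromHStoHL} together with the cited results of~\cite{BozzelliL10} on the sizes of B\"{u}chi $\NFA$ for $\HL$ and $\HL_1$, exactly as you outline.
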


 Note that by~\cite{BozzelliL10}, the B\"{u}chi $\NFA$ equivalent to a $\HL_1$ formula can be built on the fly. Recall that non-emptiness of  B\"{u}chi $\NFA$ is
 \NLOGSPACE-complete, and  the standard model checking algorithm consists in checking emptiness of the B\"{u}chi $\NFA$ resulting from the synchronous product of the given finite Kripke structure with the B\"{u}chi $\NFA$ associated with the negation of the fixed formula. Thus, by Proposition~\ref{prop:FromHStoBuchiNFA}, we obtain algorithms for satisfiability and model-checking of $\ABBbarBbarW$ which run in non-deterministic single exponential space. In~\cite{BozMPS2021}, it is shown that satisfiability and model checking of $\AB$ over \emph{finite words} is already \EXPSPACE-hard. The \EXPSPACE-hardness proof in~\cite{BozMPS2021} can be trivially adapted to handle $\AB$ over infinite words. Thus, since \EXPSPACE\  $=$ \NEXPSPACE, we obtain the following result.

 \begin{corollary}Model checking and satisfiability problems for $\ABBbarBbarW$ are both \EXPSPACE-complete.
 \end{corollary}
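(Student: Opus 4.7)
The plan is to establish the two matching bounds separately, exploiting the machinery built up in Section~\ref{sec:LinearTimeHybridLogic} for the upper bound, and leveraging the known hardness result of~\cite{BozMPS2021} for the lower bound.

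For the upper bound, I would start from Proposition~\ref{prop:FromHStoBuchiNFA}: given an $\ABBbarBbarW$ formula $\varphi$, there is a B\"{u}chi $\NFA$ $\Au_\varphi$ of size at most doubly exponential in $|\varphi|$ recognising $\Lang(\varphi)$, and crucially (by the remark citing~\cite{BozzelliL10}) this automaton can be constructed \emph{on the fly}. This means a single state of $\Au_\varphi$ can be stored in single-exponential space, and its successors under a given input letter can be produced in single-exponential space as well, without ever materialising the whole automaton. For satisfiability, I would then run the standard nondeterministic emptiness procedure for B\"{u}chi $\NFA$ (guess a state, guess a loop through an accepting state, keep only the current and target state in memory), which uses space logarithmic in $|\Au_\varphi|$, hence single-exponential in $|\varphi|$. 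For model checking, I would do exactly the same on the synchronous product of the given Kripke structure $\Ku$ with the B\"{u}chi $\NFA$ associated with the negation of $\varphi$ (obtained, e.g., by closing $\ABBbarBbarW$ under negation using the dual modalities, which does not leave the fragment). The product state uses single-exponential space, and its on-the-fly construction is immediate since transitions of $\Ku$ are polynomial and transitions of $\Au_{\neg\varphi}$ are computable as above. Hence both problems are in \NEXPSPACE, and by Savitch's theorem (\EXPSPACE\ $=$ \NEXPSPACE) they are in \EXPSPACE.

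For the lower bound, I would cite the \EXPSPACE-hardness of satisfiability and model checking for $\AB$ interpreted over \emph{finite words} proved in~\cite{BozMPS2021}, and argue that the reduction/construction used there carries over verbatim to the infinite-word setting treated in the present paper. Concretely, I would wrap the finite-word instance by appending a fresh proposition (or a suitable sink) that is true only at an endpoint, and use the modalities $\hsA,\hsB,\hsBt$ to localise all quantification to the prefix corresponding to the original finite word; since the reduction in~\cite{BozMPS2021} only inspects bounded-length prefixes via $\hsA$ and $\hsB$, the infinite suffix is harmless. Because $\ABBbarBbarW$ subsumes $\AB$ syntactically, the hardness transfers immediately.

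Combining the two directions yields \EXPSPACE-completeness of both satisfiability and model checking for $\ABBbarBbarW$. The main obstacle I foresee is verifying rigorously that the automaton of Proposition~\ref{prop:FromHStoBuchiNFA} really admits an on-the-fly construction in single-exponential \emph{working} space (as opposed to merely having doubly exponential size); once that is granted, the rest is a routine composition of known facts. The second delicate point is checking that the adaptation of the \EXPSPACE-hardness gadget of~\cite{BozMPS2021} from finite to infinite traces does not require any modality outside $\ABBbarBbarW$, but since only $\hsA$ and $\hsB$ (and implicitly $\hsBtW$ for zero-length corner cases) are used, this should be straightforward.
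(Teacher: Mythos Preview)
Your proposal is correct and mirrors the paper's own argument essentially step for step: the upper bound via Proposition~\ref{prop:FromHStoBuchiNFA}, the on-the-fly construction from~\cite{BozzelliL10}, the \NLOGSPACE\ emptiness check on the (product) automaton yielding \NEXPSPACE\ $=$ \EXPSPACE, and the lower bound by adapting the \EXPSPACE-hardness of $\AB$ over finite words from~\cite{BozMPS2021} to the infinite-word setting. The only difference is that you spell out the finite-to-infinite adaptation in slightly more detail than the paper, which simply asserts it is trivial.
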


 \subsection{Solving satisfiability and model checking of $\PHS$}\label{sec:SatisfiabilityModelCheckingPHS}

In this section, we provide an automata-theoretic approach for solving satisfiability and model checking of
$\PromptHS$ and $\PromptF{\ABBbarBbarW}$ based on Proposition~\ref{prop:FromHStoBuchiNFA} and the alternating color technique for Prompt $\LTL$~\cite{KupfermanPV09}. By Lemma~\ref{lemma:ReductionToPrompness},
we devise algorithms for solving satisfiability and model checking against $\PHS$ and $\PHSF{\ABBbarBbarW}$ as well, which for the case of $\PHSF{\ABBbarBbarW}$ are asymptotically optimal.\vspace{0.2cm}

\noindent \textbf{Alternating color technique~\cite{KupfermanPV09}.} We fix a fresh proposition $c\notin \Prop$. Let us consider
a trace $w$. A \emph{$c$-coloring of $w$} is a trace $w'$ over $\Prop\cup\{c\}$ such that $w$ and $w'$ agree at every position on all the truth values of the propositions in $\Prop$, i.e.~$w'(i)\cap\Prop=w(i)$ for all $i\geq 0$. A position $i\geq 0$ is a
\emph{$c$-change point} in $w'$ if \emph{either} $i=0$, \emph{or} the colors of $i$ and $i-1$ are different,
i.e.~$c\in w'(i)$ iff $c\notin w'(i-1)$.
A \emph{$c$-block of $w'$} is a maximal interval $[i,j]$ which has exactly one $c$-change point in $w'$, and this change point is at the first position $i$ of  $[i,j]$. Given $k\geq 1$, we say that $w'$ is \emph{$k$-bounded} if each $c$-block of $w'$ has length at most $k$,
which implies that $w'$ has infinitely many $c$-change points. Dually, we say that $w'$ is
\emph{$k$-spaced} if $w'$ has infinitely many $c$-change points and every $c$-block has length at least $k$.

We apply the alternating color technique~\cite{KupfermanPV09} for replacing a parameterized modality $\hsX_{\prec u}\psi$ in $\PromptHS$ with a  non-parameterized one requiring that the selected interval where $\psi$ holds has at most one $c$-change point.
Formally, let $\Rel_c: \PromptHS \mapsto \HS$ be the mapping associating to each $\PromptHS$ formula a $\HS$ formula, homomorphic
w.r.t.~propositions, connectives, and non-parameterized modalities, and defined as follows on parameterized formulas $\hsX_{\prec u}\psi$:
\[
\Rel_c(\hsX_{\prec u}\psi)\DefinedAs \hsX(\Rel_c(\psi) \wedge (\theta_c\vee \theta_{\neg c})).
\]
where for each $d\in\{c,\neg c\}$, $\theta_d$ is an $\AB$ formula requiring that the current interval has at most one $c$-change point and the right endpoint is a $d$-colored position:
\[
\theta_d\DefinedAs \hsA(\len{1}\wedge d)\wedge \hsUB(\hsA(\len{1}\wedge \neg d) \rightarrow \hsUB \hsA(\len{1}\wedge \neg d) ).
\]
For a $\PromptHS$ formula $\varphi$, let  $c(\varphi)$ be the $\HS$ formula defined as follows:
\[
c(\varphi)\DefinedAs \Rel_c(\varphi)\wedge \Alt_c \quad\quad \Alt_c\DefinedAs \hsUA\hsA\hsA(\len{1}\wedge c)\wedge \hsUA\hsA\hsA(\len{1}\wedge \neg c)
\]
Note that $c(\varphi)$ is a $\ABBbarBbarW$ formula if $\varphi$ is a $\PromptF{\ABBbarBbarW}$ formula. Moreover, the $\AB$ formula
$\Alt_c$ requires that there are infinitely many $c$-change points. Thus, $c(\varphi)$ forces a $c$-coloring of the given trace $w$  to be partitioned into infinitely many blocks such that each parameterized modality selects an interval with at most one $c$-change point. Like
Prompt $\LTL$~\cite{KupfermanPV09}, there is a weak equivalence between $\varphi$ and $c(\varphi)$ on $k$-bounded and $k$-spaced $c$-coloring of $w$. The following lemma rephrases Lemma 2.1 in~\cite{KupfermanPV09} and can be proved in a similar way.

\newcounter{lemma-AlternatingColororing}
\setcounter{lemma-AlternatingColororing}{\value{lemma}}

\begin{lemma}\label{lemma:AlternatingColororing} Let $\varphi$ be a $\PromptHS$ formula and $w$ be a trace.
\begin{compactenum}
  \item If $(w,\alpha)\models \varphi$, then $w'\models c(\varphi)$ for each $k$-spaced $c$-coloring $w'$ of $w$ with $k= \max_{u\in \PU}\alpha(u)$.
  \item Let $k\geq 1$. If $w'$ is a $k$-bounded $c$-coloring of $w$ with $w'\models c(\varphi)$, then $(w,\alpha)\models \varphi$,
  where $\alpha(u)=2k$ for each $u\in \PU$.
\end{compactenum}
\end{lemma}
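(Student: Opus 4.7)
The plan is to adapt the classical alternating-color argument of~\cite{KupfermanPV09} to the interval setting by means of a single structural induction on $\varphi$. The natural strengthening, indexed over arbitrary intervals $I$ over $\Nat$, is: (S1) if $(I,\alpha)\models_w \varphi$ and $w'$ is a $k$-spaced $c$-coloring of $w$ with $k\geq \max_{u\in \PU}\alpha(u)$, then $I\models_{w'}\Rel_c(\varphi)$; and (S2) if $w'$ is a $k$-bounded $c$-coloring of $w$, $\alpha(u)=2k$ for every $u\in \PU$, and $I\models_{w'}\Rel_c(\varphi)$, then $(I,\alpha)\models_w\varphi$. Since $w'(i)\cap\Prop = w(i)$ for every $i$, and $\Rel_c$ commutes with Boolean connectives and with every non-parameterized temporal modality, the atomic, Boolean and non-parameterized cases follow immediately by applying the inductive hypothesis to each sub-interval that witnesses (or is universally quantified by) the modality in question.

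The crux of the induction is the case $\varphi = \hsX_{\prec u}\psi$. For (S1), take a witness $J$ with $I\,\RelX\,J$, $|J|\prec\alpha(u)\leq k$, and $(J,\alpha)\models_w\psi$. Because $w'$ is $k$-spaced, any two distinct $c$-change points of $w'$ lie at distance at least $k$, so an interval of length at most $k$ can contain at most one such point; hence the coloring of $J$ has the form $\neg d\cdots\neg d\, d\cdots d$, where $d\in\{c,\neg c\}$ is the color of the last position of $J$. A direct unfolding of the definitions shows that this is exactly the property captured by the $\AB$-formula $\theta_d$: the conjunct $\hsA(\len{1}\wedge d)$ fixes the color of the last position of $J$, while the nested $\hsUB$-formula forces the positions colored $\neg d$ in $J$ to form a prefix. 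Combined with the inductive hypothesis applied to $\psi$ on $J$, this yields $I\models_{w'}\hsX(\Rel_c(\psi)\wedge(\theta_c\vee\theta_{\neg c}))$. For (S2), the witness $J$ of $\hsX(\Rel_c(\psi)\wedge(\theta_c\vee\theta_{\neg c}))$ must, by the very meaning of $\theta_c\vee\theta_{\neg c}$, contain at most one $c$-change point; since $w'$ is $k$-bounded, $J$ is then covered by at most two consecutive $c$-blocks, so $|J|\leq 2k = \alpha(u)$, and the inductive hypothesis applied to $\psi$ on $J$ gives $(J,\alpha)\models_w\psi$, whence $(I,\alpha)\models_w\hsX_{\prec u}\psi$.

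The original lemma then follows by instantiating (S1) and (S2) at $I=[0,0]$, noting that the $\Alt_c$ conjunct of $c(\varphi)$ is automatic in any coloring with infinitely many $c$-change points, which both the $k$-spaced and the $k$-bounded assumptions enforce by definition. The main obstacle is to verify that the interval formula $\theta_d$ faithfully encodes ``at most one $c$-change point, with the last position colored $d$'' under the non-strict semantics of $\HS$, since this correspondence is what licenses the replacement of $\hsX_{\prec u}$ by the non-parameterized $\hsX(\cdot\wedge(\theta_c\vee\theta_{\neg c}))$; the second, more mechanical, difficulty is maintaining the interval-indexed strengthening throughout, without which the inductive hypothesis cannot be invoked in the key parameterized case.
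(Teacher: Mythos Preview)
Your structural induction is exactly the adaptation of the Prompt~$\LTL$ argument the paper invokes (the paper gives no proof beyond the reference to~\cite{KupfermanPV09}), and the strengthening (S1)--(S2) over arbitrary intervals is the right vehicle. Direction~(S1) and all non-parameterized cases are fine.

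There is, however, a real gap in~(S2) when $\prec$ is the strict $<$. From $\theta_c\vee\theta_{\neg c}$ you correctly deduce that the witness $J$ is covered by at most two consecutive $c$-blocks and hence $|J|\le 2k=\alpha(u)$; this yields $(I,\alpha)\models_w\hsX_{\le u}\psi$ but \emph{not} $(I,\alpha)\models_w\hsX_{<u}\psi$, since $|J|=2k$ is not excluded. Concretely, take $\varphi=\hsA_{<u}\hsA(\len{1}\wedge p)$, a trace $w$ with $p$ true only at position $2k-1$, and the coloring by blocks of length exactly~$k$: then $w'\models c(\varphi)$ (the interval $[0,2k-1]$ satisfies both $\hsA(\len{1}\wedge p)$ and $\theta_{\neg c}$), yet no interval of length $<2k$ ends at a $p$-position, so $(w,\alpha)\not\models\varphi$ for $\alpha(u)=2k$. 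This is arguably a defect of the lemma as stated rather than of your method; the one-line repair is to take $\alpha(u)=2k+1$ in part~(2), or to normalize every $\hsX_{<u}$ to $\hsX_{\le u}$ up front and prove the lemma only for~$\le$. A smaller imprecision: ``$J$ contains at most one $c$-change point'' is not literally what $\theta_c\vee\theta_{\neg c}$ says, since the left endpoint of $J$ may itself be a change point in addition to the internal $\neg d\to d$ switch; the statement you actually need---and then use---is that $J$ meets at most two $c$-blocks, so it is cleaner to phrase it directly that way.
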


  \noindent \textbf{Solving satisfiability.} Let $\varphi$ be a $\PromptHS$ formula and $\Au_c$ be the B\"{u}chi $\NFA$
 of Proposition~\ref{prop:FromHStoBuchiNFA} accepting the models of the $\HS$  formula $c(\varphi)$. By Lemma~\ref{lemma:AlternatingColororing}, we deduce that $S(\varphi)\neq \emptyset$ if and only if there is $k\geq 1$ and some $k$-bounded $c$-coloring $w'$ accepted by $\Au_c$. Indeed, if $S(\varphi)\neq\emptyset$, then there is a parameter valuation $\alpha$  and a trace $w$ such that $(w,\alpha)\models \varphi$. Let $k= \max_{u\in \PU}\alpha(u)$ and $w'$ be the $c$-coloring
 of $w$ whose $c$-blocks have length exactly $k$. Note that $w'$ is both $k$-spaced and $k$-bounded. By Lemma~\ref{lemma:AlternatingColororing}(1), $w'\models c(\varphi)$, hence, $w'$ is accepted by $\Au_c$. Vice versa, if there is a trace $w$ and  a $k$-bounded $c$-coloring $w'$ of $w$ accepted by $\Au_c$, then, by Lemma~\ref{lemma:AlternatingColororing}(2),
 $(w,\alpha)\models \varphi$, where $\alpha(u)= 2k$  for each $u\in \PU$. Hence, $S(\varphi)\neq \emptyset$.

 Let $N_c$ be the number of  $\Au_c$ states. Assume that there is  $k$-bounded $c$-coloring $w'$ accepted by $\Au_c$ for some $k\geq 1$. We claim that there is also a $2N_C+1$-bounded $c$-coloring accepted by $\Au_c$. If $k\leq 2N_c+1$, the result is obvious. Otherwise, let $\pi$ be an accepting run of $\Au_c$ over $w'$, and let us consider the infixes $\nu$ of $\pi$ associated with the $c$-blocks of $w'$ greater than $2N_C+1$. We replace  $\nu$ with an infix of length at
 most $2N_C+1$ as follows:
 \begin{itemize}
   \item If $\nu$ does not visits accepting states, we remove from $\nu$ the maximal cycles (but the first states of such cycles) by obtaining a finite path of length at most $N_c$.
   \item If $\nu$ visits some accepting state, then $\nu$ can be written in the form $\nu=\nu_1\cdot q_{a}\cdot \nu_2$, where $q_a$ is an accepting state. We remove the maximal cycles from $\nu_1$ and $\nu_2$ (but the first states of such cycles) by obtaining a finite path of length at most $2N_c+1$.
 \end{itemize}

 In this way, we obtain an accepting run of $\Au_c$ over a $2N_c+1$-bounded $c$-coloring, and the result follows. Then,
 starting from $\Au_c$, one can easily construct in time polynomial in the size of $\Au_c$, a B\"{u}chi $\NFA$ $\Au'_c$ accepting the  $2N_c+1$-bounded colorings which are accepted by $\Au_c$. $\Au'_c$ keeps track in its state of the current state of $\Au_c$ and the binary encoding of the value of a counter modulo $2N_c+1$, where the latter is reset whenever a $c$-change point occurs. Note that if $\varphi$ is a $\PromptF{\ABBbarBbarW}$ formula, then $c(\varphi)$ is
 a $\ABBbarBbarW$ formula, and by Proposition~\ref{prop:FromHStoBuchiNFA}, the size of $\Au_c$ is doubly exponential in the size of $\varphi$. By the previous observations, it follows that
 if $S(\varphi)\neq \emptyset$, then there is a parameter valuation $\alpha\in S(\varphi)$ which is bounded doubly exponentially in $|\varphi|$.
 Thus, since non-emptiness of B\"{u}chi $\NFA$ is \NLOGSPACE-complete, by Lemma~\ref{lemma:ReductionToPrompness} and Proposition~\ref{prop:FromHStoBuchiNFA}, we obtain the following result.

  \begin{theorem}\label{theo:PHSsatisfiability} Satisfiability of $\PHS$ is decidable. Moreover, satisfiability of
$\PHSF{\ABBbarBbarW}$ is \EXPSPACE-complete and given a $\PHSF{\ABBbarBbarW}$ formula $\varphi$, in case $S(\varphi)\neq\emptyset$,
  there is a parameter valuation in  $S(\varphi)$ which is bounded doubly exponentially in $|\varphi|$.
  \end{theorem}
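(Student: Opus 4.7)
The plan is to combine the promptness reduction (Lemma~\ref{lemma:ReductionToPrompness}), the alternating color translation $\varphi\mapsto c(\varphi)$, and the automata construction of Proposition~\ref{prop:FromHStoBuchiNFA}. First I would invoke Lemma~\ref{lemma:ReductionToPrompness} to reduce satisfiability of $\PHS$ (resp.\ $\PHSF{\ABBbarBbarW}$) to satisfiability of $\PromptHS$ (resp.\ $\PromptF{\ABBbarBbarW}$). Given such a prompt formula $\varphi$, I would consider the $\HS$ formula $c(\varphi)$, which lies in $\ABBbarBbarW$ whenever $\varphi\in \PromptF{\ABBbarBbarW}$, and the associated B\"uchi $\NFA$ $\Au_c$ from Proposition~\ref{prop:FromHStoBuchiNFA}, whose number $N_c$ of states is doubly exponential in $|\varphi|$ in the $\ABBbarBbarW$ case.

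Next I would establish the equivalence that $S(\varphi)\neq\emptyset$ if and only if there exists $k\geq 1$ and a $k$-bounded $c$-coloring accepted by $\Au_c$. The forward direction uses Lemma~\ref{lemma:AlternatingColororing}(1) with $k=\max_{u\in\PU}\alpha(u)$ and the $c$-coloring of $w$ whose blocks have length exactly $k$ (which is both $k$-spaced and $k$-bounded). The converse direction is immediate from Lemma~\ref{lemma:AlternatingColororing}(2), yielding the valuation $\alpha(u)=2k$ for every $u\in \PU$.

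Then I would prove the pumping claim: if any $k$-bounded accepted $c$-coloring exists, then a $(2N_c+1)$-bounded one exists as well. On an accepting run $\pi$ of $\Au_c$, for each $c$-block of length exceeding $2N_c+1$ I would replace the corresponding infix either by a detour of length at most $N_c$ (if it visits no accepting state) or, after splitting at some accepting position, by one of length at most $2N_c+1$ (otherwise), deleting all non-trivial cycles and keeping the first state of each removed cycle. Since each block of the original $k$-bounded coloring is finite, infinitely many blocks contributed accepting visits in $\pi$, and hence the contracted run is still B\"uchi-accepting. I would then construct in polynomial time in $|\Au_c|$ a B\"uchi $\NFA$ $\Au'_c$ recognizing exactly the $(2N_c+1)$-bounded $c$-colorings accepted by $\Au_c$, by enriching the state of $\Au_c$ with a binary counter modulo $2N_c+1$ that is reset at every $c$-change point; the whole construction can be performed on the fly.

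Finally, satisfiability of $\varphi$ is equivalent to non-emptiness of $\Au'_c$, which lies in \NLOGSPACE\ in $|\Au'_c|$. For $\PHSF{\ABBbarBbarW}$, $|\Au'_c|$ is doubly exponential in $|\varphi|$, so the procedure runs in non-deterministic single exponential space, hence in \EXPSPACE\ by Savitch's theorem. The matching \EXPSPACE\ lower bound transfers from the \EXPSPACE-hardness of $\ABBbarBbarW$ satisfiability already noted. The doubly exponential bound on an optimal valuation follows from the converse direction of the equivalence applied with $k=2N_c+1$, giving $\alpha(u)\leq 2(2N_c+1)$ for all $u\in\PU$. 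The main obstacle I anticipate is the careful block-wise pumping, which must simultaneously respect the $c$-change points (to preserve the $(2N_c+1)$-bounded structure) and preserve B\"uchi acceptance globally.
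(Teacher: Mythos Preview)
Your proposal is correct and follows essentially the same route as the paper's own proof: the reduction to $\PromptHS$ via Lemma~\ref{lemma:ReductionToPrompness}, the equivalence with the existence of a bounded $c$-coloring accepted by $\Au_c$ via Lemma~\ref{lemma:AlternatingColororing}, the block-wise pumping down to $(2N_c+1)$-bounded colorings (with the accepting/non-accepting case split), the on-the-fly product automaton $\Au'_c$, and the complexity bookkeeping are all exactly as in the paper. Your explicit remark that infinitely many blocks must carry accepting visits (so B\"uchi acceptance survives the contraction) is a useful clarification the paper leaves implicit.
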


  We now show that the double exponential upper bound on the values of the parameters in Theorem~\ref{theo:PHSsatisfiability} for
  satisfiable $\PHSF{\ABBbarBbarW}$ formulas cannot be in general improved. Indeed, we provide a matching lower bound by defining for each $n\geq 1$, a $\PHSF{\AB}$ formula of size polynomial in $n$ which encodes a yardstick of length
  $(n+1)*2^{n}* 2^{2^{n}}$. This is done by using a $2^{n}$-bit counter for expressing integers in the range
$[0,2^{2^{n}}-1]$ and an $n$-bit counter for keeping track of the position (index) $i\in [0,2^{n}-1]$ of the $(i+1)^{th}$-bit of each valuation $v$ of the $2^{n}$-bit counter. In particular, such a valuation $v\in [0,2^{2^{n}}-1]$ is encoded by a sequence, called \emph{$n$-block}, of $2^{n}$ sub-blocks of length $n+1$ where
for each $i\in [0,2^{n}-1]$, the $(i+1)^{th}$ sub-block encodes both the value and the index of the $(i+1)^{th}$-bit in the binary representation of $v$.

Formally, let $\Prop\DefinedAs \{\#_1,\#_2,\$,0,1 \}$. Fix $n\geq 1$. An \emph{$n$-sub-block} is a finite word $\nu$ over $2^{\Prop}$ of length $n+1$ of the form $\nu =\{\#_1,p,bit\}\{bit_1\},\ldots,\{bit_n\}$ where $bit,bit_1,\ldots,bit_n\in \{0,1\}$ and $p\in \{\#_1,\#_2\}$. If $p= \#_2$, we say that $\nu$ is marked. The \emph{content} of $\nu$ is $bit$, and the \emph{index} of $\nu$ is the number in $[0,2^{n}-1]$ whose binary code is
$bit_1,\ldots,bit_n$. An \emph{$n$-block} is a finite word $\nu$ of length $(n+1)*2^{n}$ of the
form $\nu=\nu_0\ldots \nu_{2^{n}-1}$, where $\nu_0$ is a marked $n$-sub-block of index $0$, and for each $i\in [1,2_{n}-1]$, $\nu_i$
is an unmarked $n$-sub-block having index $i$. The index of $i$ is the natural number
in $[0,2^{2^{n}}-1]$ whose binary code is $bit_0,\ldots,bit_{2^{n}-1}$, where $bit_i$ is the content of the sub-block $\nu_i$
for each $i\in [0,2^{n}-1]$. The yardstick of length $(n+1)*2^{n}* 2^{2^{n}}$ is then encoded by the trace, called \emph{$n$-trace},
given by
%
$bl_0\cdot\ldots \cdot bl_{2^{2^{n}}-1}\cdot \{\$\}^{\omega}$
%
where $bl_i$ is the $n$-block having index $i$ for each $i\in [0,2^{2^{n}}-1]$. We first show the following result.

\newcounter{lemma-lowerBoundParameters}
\setcounter{lemma-lowerBoundParameters}{\value{lemma}}

\begin{lemma}\label{lemma:lowerBoundParameters} For each $n\geq 1$, one can construct in time polynomial in $n$ a satisfiable
$\AB$ formula $\psi_n$ whose unique model is the $n$-trace.
\end{lemma}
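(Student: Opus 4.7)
The formula $\psi_n$ will be the conjunction of polynomial-size $\AB$ sub-formulas that together rigidly fix every label of the $n$-trace. First I would set up shorthands: $\len{k}$ for small $k\in O(n)$, a ``singleton interval is exactly labelled by $\{p\}$'' pattern for each $p\in\Prop$, and the ``next sub-block'' shift $\hsA(\len{n+1}\wedge\cdots)$. A first conjunct $\varphi_{\text{format}}$ would enforce that each position carries exactly one of the expected labels, that the positions are partitioned into consecutive length-$(n+1)$ sub-blocks whose first position bears a marker ($\#_1$ or $\#_2$) together with one content bit and whose remaining $n$ positions each carry one index bit, and that after finitely many $\#_1/\#_2$-initial sub-blocks the trace switches to the tail $\{\$\}^{\omega}$.

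Next, I would enforce the $n$-bit sub-block index counter \emph{inside each block}: a $\#_2$-initial sub-block has all $n$ index bits equal to $0$, and for every non-last sub-block the index of the next sub-block is the binary successor of the current one. Since the individual bit positions inside a sub-block can be addressed by $\hsB$-prefixes and $\hsA$-hops of lengths at most $n+1$, the usual carry-propagation encoding of ``binary successor'' is expressible with polynomial-size formulas. I would also force every sub-block whose $n$ index bits are all $1$ to be immediately followed by a $\#_2$-marked sub-block, so that each block has exactly $2^n$ sub-blocks with indices $0, 1, \ldots, 2^n-1$.

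The main difficulty is encoding the $2^n$-bit block-content counter, whose bits are distributed one per sub-block across an entire block. The key step is to express, in polynomial size, the \emph{match} relation that pairs each sub-block $s$ of a non-final block $bl_k$ with the unique sub-block $s'$ of $bl_{k+1}$ having the same $n$-bit index as $s$. I would characterise $s'$ as the first sub-block reachable from $s$ via iterated $\hsA$-hops of length $n+1$ that crosses exactly one $\#_2$-marker and whose $n$ index bits agree bit-by-bit with those of $s$; the bitwise comparison of two length-$n$ indices is a conjunction of $n$ one-bit equalities, each in turn expressible by a short chain of $\hsA$ and $\hsB$ modalities. Carry propagation is then enforced via: the content bit of $s'$ equals the XOR of the content bit of $s$ and a carry-in, which is $1$ iff every sub-block of $bl_k$ strictly preceding $s$ has content bit $1$. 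This latter universal is expressed by an $\hsUB$ over the strict prefixes of the interval that starts at the $\#_2$-marker of $bl_k$ and ends at $s$.

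Finally, I would add (a)~the initial block has index $0$, i.e.~every sub-block preceding the second $\#_2$-marker has content bit $0$; and (b)~the first occurrence of $\$$ is immediately preceded by a block whose sub-blocks all carry content bit $1$ (encoding the value $2^{2^n}-1$). A routine induction on $k$ then shows that any trace satisfying the conjunction of all these constraints must coincide with the $n$-trace, while the $n$-trace itself clearly satisfies all conjuncts, so $\psi_n$ is satisfiable with unique model the $n$-trace. Every conjunct has size polynomial in $n$ and there are constantly many of them, so $\psi_n$ is constructible in polynomial time. The only point where an exponential blow-up would naively threaten is the sub-block matching across consecutive blocks, and the plan above keeps it polynomial by sharing the bit-by-bit comparison across a single $\hsA$-chain rather than enumerating matches.
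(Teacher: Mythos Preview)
Your proposal is correct and follows essentially the same route as the paper's proof. Both split $\psi_n$ into a format conjunct (the paper's $\psi_{bl}$, which already absorbs the sub-block index counter together with the boundary conditions on the first and last block) and an increment conjunct for the $2^n$-bit block-content counter, and both hinge on the same polynomial-size device for matching a sub-block $s$ with its same-index partner $s'$ in the next block: select via a single $\hsA$ an interval spanning from (just after) $s$ to the start of $s'$, require exactly one $\#_2$ inside it, and compare the $n$ index bits by reading those near the left endpoint through $\hsB$-prefixes and those near the right endpoint through $\hsA$-extensions of that one interval (the paper's $\theta_=$). The only difference is how the increment itself is phrased: the paper existentially picks the pivot sub-block (the lowest-index $0$-bit) and handles the three zones before/at/after it via the auxiliary formulas $\psi_L$, $\psi_=(0,1)$, and $\psi_R$, whereas you universally range over sub-blocks and compute a per-sub-block carry-in via $\hsUB$ on the interval from the block's $\#_2$-marker to $s$. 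This is a cosmetic variation---both are standard polynomial-size encodings of binary successor and neither buys anything over the other here.
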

\begin{proof} Fix $n\geq 1$. The $\AB$ formula $\psi_n$ is defined as
%
$\psi_n\DefinedAs \psi_{bl}\wedge  \psi_{inc}.$
%
The conjunct $\psi_{bl}$ captures the traces $w$ over $\Prop = \{\#_1,\#_2,\$,0,1 \}$ having the form
%
$bl_0\cdot\ldots \cdot bl_{k}\cdot \{\$\}^{\omega}$, for some $k\geq 1$,
%
such that the following conditions are satisfied:
\begin{compactitem}
  \item  $bl_0,\ldots,bl_k$ are $n$-blocks;
  \item $bl_0$ is the $n$-block of index $0$ (i.e., each $n$-sub-block of $bl_0$ has content $0$);
  \item $bl_k$ is the $n$-block of index $2^{2^{n}}-1$ (i.e., each $n$-sub-block of $bl_0$ has content $1$).
\end{compactitem}
One can easily construct an $\LTL$ formula of size polynomial in $n$ characterizing the traces satisfying the previous requirements.
Thus, since an $\LTL$ formula can be translated in linear time into an equivalent $\AB$ formula~\cite{BozzelliMMPS19}, we omit the details of the construction of the $\AB$ formula $\psi_{bl}$.

The conjunct $\psi_{inc}$ additionally ensures that $k= 2^{2^{n}}-1$ and for each $i\in [1,2^{2^{n}}-2]$, $bl_i$ is the $n$-block of index $i$. To this purpose, it suffices to guarantee  that in moving from a non-last $n$-block
$bl$ to the next one $bl'$, the $2^{n}$-counter is incremented.
This is equivalent to require that there is an $n$-sub-block $sbl_0$ of
$bl$ whose content is $0$ such that for each $n$-sub-block $sbl$
of $bl$, denoted by $sbl'$ the $n$-sub-block of $bl'$ having the same
index as $sbl$, the following holds: (i) if $sbl$
precedes $sbl_0$, then the content of $sbl$ (resp., $sbl'$) is $1$
(resp., $0$), (ii) if $sbl$ corresponds to $sbl_0$, then the content
of $sbl'$ is $1$, and (iii) if $sbl$ follows $sbl_0$, then there is
$b\in\{0,1\}$ such that the content of both $sbl$ and $sbl'$ is $b$. In order to express these conditions,
we define
auxiliary $\AB$ formulas. Recall that proposition $\#_1$ marks the first position of an $n$-sub-block, while
$\#_2$ marks the first position of an $n$-block $bl$ (corresponding to the first position of the first $n$-sub-block of $bl$).  For each $\AB$ formula $\varphi$, the $\AB$ formula $right(\varphi)$ requires that $\varphi$ holds at the singleton interval corresponding to the right endpoint of the current interval:
\[
right(\varphi)\DefinedAs \hsA(\len{1} \wedge \varphi)
\]
The $\AB$ formula $\psi_{one}(\#_2)$ ensures that proposition $\#_2$ occurs exactly once in the current interval, while
$\psi_{not}(\#_2)$ ensures that $\#_2$ does not occur  in the current interval. We focus on the definition of
$\psi_{one}(\#_2)$ (the definition of $\psi_{not}(\#_2)$ being similar).
\[
\psi_{one}(\#_2)\DefinedAs [right(\#_2) \vee \hsB right(\#_2)] \wedge \neg  \hsB[right(\#_2) \wedge \hsB right(\#_2)]
\wedge \neg [right(\#_2) \wedge \hsB right(\#_2)]
\]
Moreover, we define the $\AB$ formulas
 $\psi_=(b,b')$ where $b,b'\in\{0,1\}$.
Formula $\psi_=(b,b')$ holds at a singleton interval $[h,h]$ (along the given trace) iff whenever $h$ corresponds to the beginning of a $n$-sub-block
$sbl$ of an $n$-block $bl$, then (i) the content of $sbl$ is $b$,
(ii) the $n$-block $bl$ is followed by an $n$-block $bl'$, and (iii)
the $n$-sub-block of $bl'$ having the same index as
$sbl$ has content $b'$.
\[
\begin{array}{l}
     \psi_=(b,b')   \,\, \DefinedAs \,\,  \#_1 \rightarrow \bigl[b \wedge \hsA \bigl(\len{2}\wedge \hsA( \psi_{one}(\#_2) \wedge \theta_= \wedge right(b'\wedge \#_1))\bigr)\bigr]\vspace{0.2cm}\\
     \theta_=    \,\, \DefinedAs \,\, \displaystyle{\bigwedge_{h=1}^{n} \bigvee_{b\in \{0,1\}}} \bigl[ \hsB(\len{h}\wedge right(b))
     \wedge \hsA(\len{h+1}\wedge right(b))\bigr]
\end{array}
\]
 Finally, the conjunct  $\psi_{inc}$ in the definition of $\psi_n$ is given by
\[
\begin{array}{l}
     \psi_{inc}  \,\, \DefinedAs \,\, \hsUA \bigl([right(\#_2)\wedge \hsA(\neg \len{1} \wedge right(\#_2))] \longrightarrow
     \hsA [\psi_{one}(\#_2)\wedge right(\#_1\wedge \psi_=(0,1)) \wedge \psi_L \wedge \psi_R] \bigr) \vspace{0.2cm}\\
     \psi_L \,\, \DefinedAs \,\, \hsUB (right(\#_1) \rightarrow right(\psi_=(1,0)))\vspace{0.2cm}\\
     \psi_R \,\, \DefinedAs \,\, \hsA \Bigl(\len{2} \wedge \hsUA \bigl[ (\psi_{not}(\#_2) \wedge right(\#_1)) \rightarrow \displaystyle{\bigvee_{b\in\{0,1\}}} right(\psi_=(b,b))  \bigr] \Bigr )
\end{array}
\]
 This concludes the proof of Lemma~\ref{lemma:lowerBoundParameters}.
\end{proof}
Fix $n\geq 1$ and let $\psi_n$ be the $\AB$ formula in Lemma~\ref{lemma:lowerBoundParameters}. We consider the $\PHSF{\AB}$
formula $\varphi_n$ with just one parameter given by $\varphi_n\DefinedAs \psi_n\wedge \hsA_{\leq u} \hsA(\len{1}\wedge\$)$.
By Lemma~\ref{lemma:lowerBoundParameters}, the smallest value for parameter $u$ for which  $\varphi_n$ has a model is greater
than  $(n+1)*2^{n}* 2^{2^{n}}$. Hence, we obtain the following result.

\begin{proposition} There is a finite set $\Prop$ of atomic propositions and  a family $\{\varphi_n\}_{n\geq 1}$ of satisfiable $\PHSF{\AB}$ formulas over $\Prop$ with just one parameter such that for each $n\geq 1$, $\varphi_n$ has size polynomial in $n$ and
the smallest parameter valuation in $S(\varphi_n)$ is doubly exponential in $n$.
\end{proposition}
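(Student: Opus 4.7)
The plan is to leverage what is already in place: Lemma~\ref{lemma:lowerBoundParameters} provides a polynomial-size $\AB$ formula $\psi_n$ whose \emph{unique} model is the $n$-trace $w_n$, and the $n$-trace places the first occurrence of $\$$ at position $N_n \DefinedAs (n+1)\cdot 2^{n}\cdot 2^{2^{n}}$. The formula $\varphi_n \DefinedAs \psi_n \wedge \hsA_{\leq u}\hsA(\len{1}\wedge \$)$ therefore anchors the one remaining parameter $u$ against the distance to the first $\$$-position of $w_n$. The size bound on $\varphi_n$ is immediate from Lemma~\ref{lemma:lowerBoundParameters} since the second conjunct is of constant size.

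I would first show satisfiability by producing a witness valuation. Set $\alpha(u)\DefinedAs N_n+1$. Since $w_n$ satisfies $\psi_n$, it suffices to verify $([0,0],\alpha)\models_{w_n}\hsA_{\leq u}\hsA(\len{1}\wedge \$)$. Take $J\DefinedAs [0,N_n]$: then $[0,0]\,\RelA\,J$ holds in the non-strict semantics (the end-point of $[0,0]$ equals the start-point of $J$), $|J|=N_n+1=\alpha(u)$, and the meets-successor $[N_n,N_n]$ of length $1$ carries $\$$ in $w_n$. Hence $\alpha\in S(\varphi_n)$.

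Next I would establish the matching lower bound. Let $\alpha\in S(\varphi_n)$, witnessed by some trace $w$. The conjunct $\psi_n$ forces $w=w_n$ by the uniqueness clause of Lemma~\ref{lemma:lowerBoundParameters}. Unfolding $([0,0],\alpha)\models_{w_n}\hsA_{\leq u}\hsA(\len{1}\wedge \$)$ yields an interval $J=[0,j]$ with $|J|=j+1\leq \alpha(u)$ together with a singleton meets-successor $[j,j]$ at which $\$$ holds, i.e.\ $\$\in w_n(j)$. Because every position strictly below $N_n$ belongs to some $n$-block and therefore carries a letter in $\{\#_1,\#_2,0,1\}\setminus\{\$\}$, we conclude $j\geq N_n$, and therefore $\alpha(u)\geq N_n+1$. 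Combined with the satisfiability argument, the smallest valuation in $S(\varphi_n)$ is exactly $N_n+1$, which is doubly exponential in $n$.

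The only delicate point is the careful handling of the non-strict semantics of the meets relation $\RelA$, which is what allows $J$ to begin at position $0$ and thereby lets the parametric upper bound translate directly into an upper bound on the first $\$$-position; there is no genuine combinatorial difficulty once Lemma~\ref{lemma:lowerBoundParameters} is in hand.
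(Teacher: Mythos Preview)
Your proof is correct and follows exactly the paper's approach: the paper defines the same formula $\varphi_n \DefinedAs \psi_n \wedge \hsA_{\leq u}\hsA(\len{1}\wedge \$)$ and appeals to Lemma~\ref{lemma:lowerBoundParameters} to conclude that the minimal value of $u$ must exceed $(n+1)\cdot 2^{n}\cdot 2^{2^{n}}$. Your write-up is in fact more detailed than the paper's, which dispatches the argument in a single sentence; your explicit verification that the witnessing interval $J=[0,N_n]$ works and that any $J=[0,j]$ with $\$\in w_n(j)$ forces $j\geq N_n$ is exactly the content the paper leaves implicit.
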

  \vspace{0.1cm}

\noindent \textbf{Solving model checking.} A \emph{fair} Kripke structure $\Ku_f$ is a  Kripke structure equipped with a
 set $S_f$ of accepting states. An infinite path of $\Ku_f$ is \emph{fair}  if it visits infinitely many times states in $S_f$.
 Assume that $\Ku_f$ is over the set of atomic propositions given by $\Prop\cup\{c\}$, and let $\Lab_c$ be the associated propositional labeling. A \emph{$c$-pumpable fair path} of $\Ku_f$ is a fair infinite path $\pi$ of $\Ku_f$ such that each infix of $\pi$ associated to a $c$-block of the trace $\Lab_c(\pi)$ visits some state at least twice. Let $\Ku=\tpl{\Prop,S, E,\Lab,s_0}$ be a Kripke structure over $\Prop$, $\varphi$  a $\PromptHS$ formula, and $\Au_{\neg c}=\tpl{2^{\Prop\cup\{c\}},Q,q_0,\delta,F}$ be the  B\"{u}chi $\NFA$
 of Proposition~\ref{prop:FromHStoBuchiNFA} accepting the models of the $\HS$  formula $\neg \Rel_c(\varphi)\wedge \Alt_c$ (note that we consider the negation of $\Rel_c(\varphi)$). We define the fair Kripke structure
 \[
 \Ku\times \Au_{\neg c} = \tpl{\Prop\cup\{c\},
 S\times Q\times 2^{\{c\}},(s_0,q_0,\emptyset),E_c,\Lab_c,S\times F\times 2^{\{c\}}}
 \]
where (i) $((s,q,C),(s',q',C'))\in E_c$ iff $(s,s')\in E$ and $q'\in\delta(q,C\cup \Lab(s))$, and  (ii) $\Lab_c(s,q, C)=\Lab(s)\cup C$.
By construction, the traces  associated to the fair infinite paths of $ \Ku\times \Au_{\neg c}$ correspond to the $c$-colorings $w'$ of the traces of $\Ku$ which are accepted by $\Au_{\neg c}$ such that $c\notin w'(0)$. The following lemma is similar to Lemma~4.2 in~\cite{KupfermanPV09} and provides a characterization of emptiness of the set $V(\Ku,\varphi)$ of parameter valuations.\vspace{0.1cm}

\begin{lemma}\label{lemma:FromModelCheckingPumpable}
$\Ku$ does not satisfy $\varphi$ $($i.e., $V(\Ku,\varphi)=\emptyset$$)$ \emph{iff} $\Ku\times \Au_c$ has a $c$-pumpable fair path.
\end{lemma}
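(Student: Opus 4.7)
The plan is to match parameter valuations in $V(\Ku,\varphi)$ with fair paths of $\Ku\times\Au_{\neg c}$ via Lemma~\ref{lemma:AlternatingColororing}, exploiting the fact that $\Au_{\neg c}$ accepts precisely the $c$-colorings that falsify $\Rel_c(\varphi)$ while still alternating infinitely often. The decisive observation is that $c$-pumpability converts bounded $c$-blocks into arbitrarily long ones: it lets us produce, for every $k\geq 1$, a $k$-spaced $c$-coloring accepted by $\Au_{\neg c}$ of some trace of $\Ku$; conversely, forcing $c$-blocks of length greater than the number of product states produces a fair path whose blocks are necessarily pumpable.

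For the $(\Leftarrow)$ direction, I would start from a $c$-pumpable fair path $\pi$ of $\Ku\times\Au_{\neg c}$ and show that, for every $k\geq 1$, $\pi$ can be pumped into a new fair path $\pi_k$ whose projected $c$-coloring $w'_k$ is $k$-spaced: inside each $c$-block of $\pi$ some product state repeats, so I iterate the corresponding local cycle enough times to stretch that block to length at least $k$. The projected $w'_k$ is then an accepted $c$-coloring of a trace $w_k\in\Lang(\Ku)$, hence $w'_k\models\neg\Rel_c(\varphi)\wedge\Alt_c$, i.e., $w'_k\not\models c(\varphi)$. Given any parameter valuation $\alpha$, setting $k=\max_{u\in\PU}\alpha(u)$ and applying the contrapositive of Lemma~\ref{lemma:AlternatingColororing}(1) yields $(w_k,\alpha)\not\models\varphi$, so $\alpha\notin V(\Ku,\varphi)$; since $\alpha$ was arbitrary, $V(\Ku,\varphi)=\emptyset$.

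For the $(\Rightarrow)$ direction, let $N$ be the number of states of $\Ku\times\Au_{\neg c}$ and put $k:=N+1$. Take $\alpha$ with $\alpha(u)=2k$ for every $u\in\PU$; by the assumption $V(\Ku,\varphi)=\emptyset$ there is $w\in\Lang(\Ku)$ with $(w,\alpha)\not\models\varphi$. Let $w'$ be the $c$-coloring of $w$ whose $c$-blocks all have length exactly $k$; then $w'$ is $k$-bounded and alternates infinitely often, so $w'\models\Alt_c$. The contrapositive of Lemma~\ref{lemma:AlternatingColororing}(2) gives $w'\not\models c(\varphi)$, therefore $w'\models\neg\Rel_c(\varphi)\wedge\Alt_c$, i.e., $w'$ is accepted by $\Au_{\neg c}$ and lifts to a fair infinite path $\pi$ of $\Ku\times\Au_{\neg c}$. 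Since every $c$-block of $\pi$ visits $k=N+1>N$ consecutive product states, pigeonhole forces a repetition inside each block, showing that $\pi$ is $c$-pumpable.

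The main obstacle I anticipate is the pumping step in the $(\Leftarrow)$ direction: since the argument simultaneously stretches infinitely many $c$-blocks, I have to ensure that the B\"uchi acceptance condition is preserved in $\pi_k$. This requires, inside each $c$-block of $\pi$ that already hosts an accepting visit, to insert the repeated cycle either before or after the accepting state (or around a different state of the block) so that the visit survives; in the remaining blocks the cycle may be inserted freely. Once this bookkeeping is set up, the remainder of the argument reduces to an appeal to Lemma~\ref{lemma:AlternatingColororing} together with a routine pigeonhole count on the product state space.
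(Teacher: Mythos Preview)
Your proof is correct and follows essentially the same route as the paper's. Two minor remarks: the paper uses the slightly tighter pigeonhole bound $k=|Q||S|+1$ (observing that the $c$-component of a product state is constant within a $c$-block), and your worry about preserving B\"uchi acceptance under pumping is unnecessary, since inserting repetitions of a cycle only adds state visits and never deletes any, so every accepting visit of the original fair path survives automatically.
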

\begin{proof}
For the right implication, assume that $V(\Ku,\varphi)=\emptyset$. We need to show that  $\Ku\times \Au_c$ has a $c$-pumpable fair path. Let $k=  |Q||S| +1$ and $\alpha$ be the parameter valuation defined by $\alpha(u)= 2k$ for each $u\in U$. Since
$V(\Ku,\varphi)=\emptyset$, there is a trace $w$ of $\Ku$ such that $(w,\alpha)\not\models \varphi$. Let $w'$ be the $k$-bounded $c$-coloring of $w$ such that each $c$-block of $w'$ has length exactly $k$ and $c\notin w'(0)$. Since $w'\models \Alt_c$ and $c(\varphi)=\Rel_c(\varphi)\wedge \Alt_c$, by Lemma~\ref{lemma:AlternatingColororing}(2), it follows that $w'\models \neg \Rel_c(\varphi) \wedge \Alt_c$. Hence, $w'$ is accepted by the B\"{u}chi automaton $\Au_{\neg c}$, and
by construction there is a  fair path $\pi$ of $\Ku\times \Au_{\neg c}$ whose trace is $w'$. Now, each infix of $\pi$ associated to a $c$-block of $w'$ has length $k=|Q||S|+1$. Moreover,
by construction, the third component $C$ of the states $(s,q,C)\in S\times Q \times 2^{{c}}$ along such an infix does not change. It follows that such an infix visits one state at least twice. Thus, $\pi$ is a $c$-pumpable fair path of $\Ku\times \Au_{\neg c}$.\vspace{0.1cm}

For the left implication, assume that $\Ku\times \Au_c$ has a $c$-pumpable fair path  $\rho$. Let
$\alpha$ be an arbitrary parameter valuation and $k= \max_{u\in U}\alpha(u)$. We need to show that there is a trace $w$ of $\Ku$ such that $(w,\alpha)\not\models \varphi$. Since $\rho$ is a  $c$-pumpable fair path, each infix of $\rho$ associated to a $c$-block of the trace $\Lab_c(\rho)$ visits some state at least twice. The corresponding cycle in the infix can be pumped $k$-times. It follows that there is a $c$-pumpable fair path $\rho'$ of $\Ku\times \Au_{\neg c}$ such that  the $c$-blocks of the associated trace $w'$  have length at least $k$. By construction, $w'$ is the $c$-coloring of some trace $w$ of $\Ku$ and $w'$ is accepted by $\Au_{\neg c}$, i.e.~$w'\models \neg \Rel_c(\varphi)\wedge \Alt_c$. Hence, $w'$ is a $k$-spaced coloring of $w$ and $w'\not\models c(\varphi)$. By Lemma~\ref{lemma:AlternatingColororing}(1), it follows that $(w,\alpha)\not\models\varphi$, and the result follows.
\end{proof}

 By Lemma~\ref{lemma:FromModelCheckingPumpable}, we deduce that if $V(\Ku,\varphi)\neq \emptyset$,
then for the parameter valuation $\alpha$ such that $\alpha(u)= 2(|Q||S|+1)$ for each $u\in \PU$, it holds that   $\alpha\in V(\Ku,\varphi)$.
Indeed if $\alpha\notin V(\Ku,\varphi)$,
 by the first part of the proof of Lemma~\ref{lemma:FromModelCheckingPumpable},  there is a
 $c$-pumpable fair path of $ \Ku\times \Au_{\neg c}$, which leads to the contradiction $V(\Ku,\varphi)=\emptyset$.
It is known that checking the existence of a $c$-pumpable fair path in a fair Kripke structure is \NLOGSPACE-complete~\cite{KupfermanPV09}.
Recall that if $\varphi$ is a $\PromptF{\ABBbarBbarW}$ formula, then $c(\varphi)$ is
 a $\ABBbarBbarW$ formula, and by Proposition~\ref{prop:FromHStoBuchiNFA}, the size of $\Au_{\neg c}$ is doubly exponential in the size of $\varphi$. Thus, since both  $\Au_{\neg c}$ and $ \Ku\times \Au_{\neg c}$ can be built on the fly,
  by Lemma~\ref{lemma:ReductionToPrompness}, Proposition~\ref{prop:FromHStoBuchiNFA}, and Lemma~\ref{lemma:FromModelCheckingPumpable}, we obtain the following result.

 \begin{theorem}\label{theo:PHSModelChecking} Model checking against $\PHS$ is decidable. Moreover, model checking a   Kripke
  structure $\Ku$ against a  $\PHSF{\ABBbarBbarW}$ formula $\varphi$ is \EXPSPACE-complete and, in case $V(\Ku,\varphi)\neq\emptyset$,  there is a parameter valuation in  $V(\Ku,\varphi)$ which is bounded doubly exponentially in $|\varphi|$ and linearly in the number of $\Ku$-states.
 \end{theorem}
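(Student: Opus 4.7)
The plan is to assemble three ingredients that are already at our disposal: (i) Lemma~\ref{lemma:ReductionToPrompness}, which reduces the $\PHS$ (resp.\ $\PHSF{\ABBbarBbarW}$) model-checking problem in linear time to $\PromptHS$ (resp.\ $\PromptF{\ABBbarBbarW}$) model checking; (ii) Proposition~\ref{prop:FromHStoBuchiNFA}, which turns an $\HS$ formula into an equivalent B\"uchi $\NFA$, with a doubly exponential blow-up in the $\ABBbarBbarW$ case and an on-the-fly construction; and (iii) Lemma~\ref{lemma:FromModelCheckingPumpable}, which characterizes $V(\Ku,\varphi)=\emptyset$ as the existence of a $c$-pumpable fair path in the synchronous product $\Ku\times \Au_{\neg c}$, where $\Au_{\neg c}$ accepts the models of $\neg \Rel_c(\varphi)\wedge \Alt_c$.

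For decidability of full $\PHS$, I would first use Lemma~\ref{lemma:ReductionToPrompness} to assume $\varphi\in\PromptHS$; then apply Proposition~\ref{prop:FromHStoBuchiNFA} to construct $\Au_{\neg c}$, which is effective since $\neg\Rel_c(\varphi)\wedge\Alt_c$ is a plain $\HS$ formula; then form the fair Kripke structure $\Ku\times \Au_{\neg c}$ and test non-existence of a $c$-pumpable fair path in it. All steps are effective, and correctness follows from Lemma~\ref{lemma:FromModelCheckingPumpable}.

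For the complexity upper bound in the $\PHSF{\ABBbarBbarW}$ case, the crucial observation is that when $\varphi\in\PromptF{\ABBbarBbarW}$, the formula $\Rel_c(\varphi)\wedge \Alt_c$ (and thus its negation conjoined with $\Alt_c$) remains in $\ABBbarBbarW$, because $\Rel_c$, $\theta_c$, $\theta_{\neg c}$ and $\Alt_c$ use only modalities $\hsA$, $\hsB$, $\hsBt$, $\hsBtW$ and their duals. Hence Proposition~\ref{prop:FromHStoBuchiNFA} yields $\Au_{\neg c}$ of doubly exponential size, constructible on the fly. Since $c$-pumpable fair path existence is \NLOGSPACE-complete in the size of the product~\cite{KupfermanPV09}, and since $\Ku\times \Au_{\neg c}$ can itself be generated on the fly, we obtain an \NEXPSPACE\ algorithm, which by Savitch collapses to \EXPSPACE. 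Matching \EXPSPACE-hardness is inherited from the Corollary, as plain $\ABBbarBbarW$ is a syntactic sub-logic of $\PHSF{\ABBbarBbarW}$.

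For the bound on parameter valuations, I would exploit the quantitative content of Lemma~\ref{lemma:FromModelCheckingPumpable}: in its forward direction, whenever $V(\Ku,\varphi)=\emptyset$, the witnessing $c$-pumpable fair path is obtained by taking $k=|Q||S|+1$ and $\alpha(u)=2k$ for every $u\in \PU$. Contrapositively, if $V(\Ku,\varphi)\ne\emptyset$, then this particular valuation must itself belong to $V(\Ku,\varphi)$, since otherwise the argument would produce a $c$-pumpable fair path of $\Ku\times \Au_{\neg c}$ and, again by Lemma~\ref{lemma:FromModelCheckingPumpable}, force $V(\Ku,\varphi)=\emptyset$. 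As $|Q|$ is doubly exponential in $|\varphi|$ and $|S|$ is the number of $\Ku$-states, the stated bound follows. The only real obstacle was already discharged upstream, namely the doubly exponential automata-theoretic translation of $\ABBbarBbarW$ via one-variable hybrid logic in Proposition~\ref{prop:FromHStoBuchiNFA}; the present theorem is then a careful assembly of the pieces together with a bookkeeping of sizes.
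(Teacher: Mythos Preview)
Your proposal is correct and follows essentially the same route as the paper: reduce to $\PromptHS$ via Lemma~\ref{lemma:ReductionToPrompness}, build $\Au_{\neg c}$ for $\neg\Rel_c(\varphi)\wedge\Alt_c$ via Proposition~\ref{prop:FromHStoBuchiNFA}, invoke Lemma~\ref{lemma:FromModelCheckingPumpable} together with the \NLOGSPACE\ test for $c$-pumpable fair paths, and extract the parameter bound $\alpha(u)=2(|Q||S|+1)$ from the contrapositive of the forward direction of that lemma. The paper's argument, given in the paragraph immediately preceding the theorem, is the same assembly with the same bookkeeping.
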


Similarly to the satisfiability problem for $\PHSF{\ABBbarBbarW}$, for each $n\geq 1$, we provide a lower bound of $2^{2^{n}}$ on the minimal parameter valuation for which a fixed Kripke structure satisfies a $\PHSF{\AB}$ formula by using a $\PHSF{\AB}$ formula
of size polynomial in $n$. For each $n\geq 1$, let $\psi_n$
be the $\AB$ formula over $\Prop= \{\#_1,\#_2,\$,0,1 \}$ in Lemma~\ref{lemma:lowerBoundParameters} whose unique model is the $n$-trace. One can trivially define a Kripke structure $\Ku$ over $\Prop$ whose set of traces consists of the traces whose first position  has label $\{\#_1,\#_2, 0\}$. Let us consider the  $\PHSF{\AB}$
formula $\varphi_n$ with just one parameter given by $\varphi_n\DefinedAs \psi_n \longrightarrow \hsA_{\leq u} \hsA(\len{1}\wedge\$)$.
Evidently,
by Lemma~\ref{lemma:lowerBoundParameters}, $V(\Ku,\varphi_n)$ is not empty and the minimal parameter valuation
in $V(\Ku,\varphi_n)$ is doubly exponential in $n$.  Hence, we obtain the following result.

\begin{proposition} There is a Kripke structure $\Ku$ over a  set  $\Prop$ of atomic propositions  and  a family $\{\varphi_n\}_{n\geq 1}$ of $\PHSF{\AB}$ formulas over $\Prop$ with just one parameter such that for each $n\geq 1$, $\varphi_n$ has size polynomial in $n$, $V(\Ku,\varphi_n)\neq \emptyset$,  and
the smallest parameter valuation in $V(\Ku,\varphi_n)$ is doubly exponential in $n$.
\end{proposition}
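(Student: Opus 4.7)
The plan is to leverage Lemma~\ref{lemma:lowerBoundParameters}, which supplies a polynomial-size $\AB$ formula $\psi_n$ whose unique model is the $n$-trace encoding a yardstick of length $(n+1)\cdot 2^n \cdot 2^{2^n}$ preceding the $\{\$\}^\omega$ tail. I would take $\Ku$ to be a small (constant-size) Kripke structure over $\Prop=\{\#_1,\#_2,\$,0,1\}$ whose unique initial state carries the label $\{\#_1,\#_2,0\}$ and which then admits all possible successor labellings; the crucial property is that $\Lang(\Ku)$ contains the $n$-trace for every $n$. I would then define $\varphi_n \DefinedAs \psi_n \longrightarrow \hsA_{\leq u}\hsA(\len{1}\wedge \$)$, a one-parameter $\PHSF{\AB}$ formula of size polynomial in $n$, with $u$ used as an upward parameter.

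For non-emptiness of $V(\Ku,\varphi_n)$, I would exhibit an explicit witness: any $\alpha$ with $\alpha(u)\geq (n+1)\cdot 2^n \cdot 2^{2^n}+1$ lies in $V(\Ku,\varphi_n)$. Indeed, fix $w\in\Lang(\Ku)$; either $w\not\models \psi_n$ and the implication is vacuously satisfied, or $w$ is the $n$-trace by uniqueness, in which case evaluating $\hsA_{\leq u}\hsA(\len{1}\wedge \$)$ at $[0,0]$ lets us select an $\hsA$-interval $[0,z]$ with $z+1\leq \alpha(u)$ whose right endpoint $z$ lies inside the $\{\$\}^\omega$ tail, and the inner $\hsA(\len{1}\wedge \$)$ then witnesses $\$$ at position $z$.

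For the matching lower bound, suppose $\alpha\in V(\Ku,\varphi_n)$. Since the $n$-trace $w_n$ belongs to $\Lang(\Ku)$ and $w_n\models \psi_n$, we must have $([0,0],\alpha)\models_{w_n} \hsA_{\leq u}\hsA(\len{1}\wedge \$)$. Unpacking the non-strict semantics of the two nested meet modalities, this requires some position $z\leq \alpha(u)-1$ with $\$\in w_n(z)$. But by construction of the $n$-trace, no $\$$ occurs before position $(n+1)\cdot 2^n \cdot 2^{2^n}$, so $\alpha(u)$ must be at least $(n+1)\cdot 2^n \cdot 2^{2^n}+1$, which is doubly exponential in $n$, as claimed.

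The only delicate step is the careful index bookkeeping for the composition $\hsA_{\leq u}\hsA(\len{1}\wedge \$)$ under the non-strict convention at $[0,0]$ — verifying that it really captures ``there is a $\$$-position at index at most $\alpha(u)-1$'', accounting for the fact that under meet the two intervals share an endpoint and that $\hsA$ can legitimately select singleton right-extensions. Once this routine unpacking is done, the conclusion is an immediate combination of the uniqueness of the model of $\psi_n$ given by Lemma~\ref{lemma:lowerBoundParameters} and the position of the first $\$$ in the $n$-trace.
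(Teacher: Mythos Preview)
Your proposal is correct and follows essentially the same approach as the paper: the same Kripke structure $\Ku$ (all traces with initial label $\{\#_1,\#_2,0\}$), the same formula $\varphi_n \DefinedAs \psi_n \longrightarrow \hsA_{\leq u}\hsA(\len{1}\wedge \$)$, and the same appeal to Lemma~\ref{lemma:lowerBoundParameters}. The paper dispatches the argument in one sentence (``Evidently\ldots''), whereas you spell out the witness valuation and the semantic unpacking of $\hsA_{\leq u}\hsA(\len{1}\wedge \$)$ at $[0,0]$; your index bookkeeping is accurate.
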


\section{Conclusion}

We have introduced parametric $\HS$ ($\PHS$), a parametric extension of the interval temporal logic 
$\HS$ under the trace-based semantics. The novel logic
allows to express parametric timing constraints on the duration  of the intervals. We have shown 
that the satisfiability and model checking problems for the whole logic are decidable, and for the fragment 
$\PHSF{\ABBbar}$ of $\PHS$, the problems are \EXPSPACE-complete. Moreover, for the fragment
$\PHSF{\ABBbar}$, we gave tight bounds on optimal parameter values for the considered problems. An intriguing open 
question is the expressiveness of $\PHSF{\ABBbar}$ (or more in general $\PHS$) versus parametric $\LTL$ ($\PLTL$).
We have shown that $\PHSF{\ABBbar}$ subsumes $\PLTL$. In particular, given a $\PLTL$ formula $\varphi$, it is possible to construct in linear time a $\PHSF{\ABBbar}$ on the same set of parameters which is equivalent to $\varphi$ \emph{for each parameter valuation}.
Is $\PHSF{\ABBbar}$ more expressive than $\PLTL$? Another problem left open is whether \PromptHS\ is strictly less expressive than full \PHS.

\bibliographystyle{eptcs}
\bibliography{bib2}
 

\end{document}